\newcommand\doubleplus{\ensuremath{\mathbin{+\mkern-10mu+}}}
\DeclareMathOperator{\revop}{R}
\DeclareMathOperator{\dist}{dist}
\DeclareMathOperator{\rt}{rt}
\DeclareMathOperator{\EX}{\mathbb{E}}
\DeclarePairedDelimiter\floor{\lfloor}{\rfloor}
\DeclarePairedDelimiter\set{\{}{\}}
\DeclarePairedDelimiter\abs{\lvert}{\rvert}
\DeclarePairedDelimiterXPP\rev[1]{\revop}{(}{)}{}{#1}
\DeclarePairedDelimiterXPP\bigo[1]{O}{(}{)}{}{#1}
\DeclarePairedDelimiterXPP\tildebigo[1]{\widetilde O}{(}{)}{}{#1}
\DeclarePairedDelimiterXPP\rnumber[1]{\rt}{(}{)}{}{#1}
\DeclarePairedDelimiterXPP\expected[1]{\EX}{[}{]}{}{#1}
\DeclarePairedDelimiterXPP\probability[1]{\Pr}{(}{)}{}{#1}
\newcommand{\N}{\mathbb{N}}
\newcommand{\e}{\varepsilon}
\newcommand{\underwrite}[3][]{% \underwrite[<thickness>]{<numerator>}{<denominator>}
  \genfrac{}{}{#1}{}{\textstyle #2}{\textstyle #3}
}
\newcommand{\swap}[0]{\textsc{swap}}
\newcommand{\cz}[0]{\textsc{cz}}
\newcommand{\RoutingViaMatchings}[0]{\textsc{Routing via Matchings}}
\newcommand{\bullets}{****}
\definecolor{dark-red}{rgb}{0.4,0.15,0.15}
\definecolor{dark-blue}{rgb}{0.15,0.15,0.4}
\definecolor{medium-blue}{rgb}{0,0,0.5}
\definecolor{mycomment}{rgb}{0.3,0.7,0.8}
\definecolor{mygray}{rgb}{0.5,0.5,0.5}
\definecolor{lightgray}{rgb}{0.95,0.95,0.95}
\definecolor{mymauve}{rgb}{0.58,0,0.82}
\newcommand{\newtext}[1]{#1}
\newcommand{\length}[1]{|#1|}
\newcommand{\Kfive}{                    \begin{tikzpicture}
            %% left subgraph 
            \draw[fill=black] (0.5,0) circle (3pt);
            \draw[fill=black] (2.5,0) circle (3pt);
            \draw[fill=black] (3,2) circle (3pt);
            \draw[fill=black] (1.5,3) circle (3pt);
            \draw[fill=black] (0,2) circle (3pt);
            %% vertex labels
            
            %%% edges
            \draw[] (0.5,0) -- (2.5,0) -- (3,2) -- (1.5,3) -- (0,2) -- (0.5,0) -- (3,2) -- (0,2) -- (2.5,0) -- (1.5,3) -- (0.5,0);

        \end{tikzpicture}
}
\newlist{ienumerate}{enumerate*}{1}
\setlist*[ienumerate,1]{%
	label=(\roman*),
}
\crefname{figure}{Figure}{Figures}
\newtheorem{theorem}{Theorem}[section]
\newtheorem{lemma}[theorem]{Lemma}
\newtheorem{corollary}[theorem]{Corollary}
\theoremstyle{definition}
\title{Quantum routing with fast reversals}
\author[1,4]{Aniruddha Bapat}
\email{ani@umd.edu}
\author[1,2,3]{Andrew M. Childs}
\email{amchilds@umd.edu}
\author[1,4]{Alexey V. Gorshkov}
\email{gorshkov@umd.edu}
\author[5]{Samuel King}
\author[1,2,3]{Eddie Schoute}
\email{eschoute@umd.edu}
\author[6]{Hrishee Shastri}
\affil[1]{Joint Center for Quantum Information and Computer Science, NIST\!/University of Maryland, College Park, Maryland 20742, USA}
\affil[2]{Institute for Advanced Computer Studies, University of Maryland, College Park, Maryland 20742, USA}
\affil[3]{Department of Computer Science, University of Maryland, College Park, Maryland 20742, USA}
\affil[4]{Joint Quantum Institute, NIST\!/University of Maryland, College Park, Maryland 20742, USA}
\affil[5]{University of Rochester, Rochester, New York 14627, USA}
\affil[6]{Reed College, Portland, Oregon 97202, USA}
\date{}
\begin{document}

%%%%%%%%%%%%%%%%%%%%%%%%%%%%%%%%%%%%%%%%%%%%%%%%%%%%%%%%%%%%%%%%%%%%%%%%%%%%%%%%

\maketitle

\begin{abstract}
\noindent We present methods for implementing arbitrary permutations of qubits 
under interaction constraints. Our protocols make use of previous methods for rapidly reversing the order of qubits along a path.
Given nearest-neighbor interactions on a path of length $n$,
we show that there exists a constant $\epsilon \approx 0.034$ such that the quantum routing time
is at most $(1-\epsilon)n$,
whereas any \swap{}-based protocol needs at least time $n-1$.
This represents the first known quantum advantage over \swap{}-based routing methods
and also gives improved quantum routing times for realistic architectures such as grids.
Furthermore, we show that our algorithm approaches a quantum routing time of $2n/3$ in expectation for uniformly random permutations,
whereas \swap{}-based protocols require time $n$ asymptotically.
Additionally, we consider sparse permutations that route $k \le n$ qubits
and give algorithms with quantum routing time at most $n/3 + \bigo{k^2}$ on paths and at most 
$2r/3 + \bigo{k^2}$ on general graphs with radius $r$.
\end{abstract}

%%%%%%%%%%%%%%%%%%%%%%%%%%%%%%%%%%%%%%%%%%%%%%%%%%%%%%%%%%%%%%%%%%%%%%%%%%%%%%%%
\section{Introduction}
Qubit connectivity limits quantum information transfer,
which is a fundamental task for %both 
quantum computing.
%and long-range quantum communication.
While the common model for quantum computation usually assumes all-to-all connectivity,
proposals for scalable quantum architectures do not have this capability~\cite{Monroe2013,Monroe2014,Brecht2016}.
Instead, quantum devices arrange qubits in a fixed architecture that fits within engineering and design constraints.
For example, the architecture may be grid-like~\cite{McClure2019,Arute2019} or consist of a network of submodules~\cite{Monroe2013,Monroe2014}.
Circuits that assume all-to-all qubit connectivity can be mapped onto these architectures via protocols for \emph{routing} qubits,
i.e., permuting them within the architecture using local operations.
% Likewise, large-scale quantum networks have qubit connectivity limited by cost or ability to set up links over long distances and the problem of fast routing is key to the efficiency of quantum communication protocols on such networks.
% (In such networking scenarios we do not expect the underlying state reversal protocol can be implemented, however.)

Long-distance gates can be implemented using \swap{} gates along edges of the graph of available interactions.
A typical procedure swaps pairs of distant qubits along edges until they are adjacent,
at which point the desired two-qubit gate is applied on the target qubits.
These swap subroutines can be sped up by parallelism and careful scheduling~\cite{Saeedi2011,Shafaei2013,Shafaei2014,Pedram2016,Lye2015,Murali2019,Zulehner2019}.
Minimizing the \swap{} circuit depth corresponds to the \RoutingViaMatchings{} problem~\cite{Alon1994,Childs2019}.
The minimal \swap{} circuit depth to implement any permutation on a graph $G$ is given by
its \emph{routing number}, $\rnumber{G}$~\cite{Alon1994}.
Deciding $\rnumber{G}$ 
% when $\rnumber{G} > 2$ 
is generally NP-hard~\cite{Banerjee2017},
but there exist algorithms for architectures of interest such as grids and other graph products~\cite{Alon1994,Zhang1999,Childs2019}. Furthermore, one can establish lower bounds on the routing number as a function of graph diameter and other properties.

% Given ancillary resources and the ability to perform fast local measurements and classical communication, qubit connectivity can be enhanced by, e.g., pre-shared long-distance entanglement in the form of virtual quantum links~\cite{Schoute2016}.

Routing using \swap{} gates does not necessarily give minimal circuit evolution time since it is effectively classical and does not make use of the full power of quantum operations.
Indeed, faster protocols are already known for specific permutations in specific qubit geometries such as the path~\cite{Raussendorf2005,spin-chain}.
These protocols tend to be carefully engineered and do not generalize readily to other permutations, leaving open the general question of devising faster-than-\swap{} quantum routing.
In this paper, we give a positive answer to this question.

\newtext{Following \cite{Raussendorf2005,spin-chain}, we consider a continuous-time model of routing, where the protocol is defined by a Hamiltonian that can only include nearest-neighbor interactions.
To make consistent comparisons with a gate-based model of routing, we bound the spectral norm of interactions~\cite{spin-chain} so that a \swap{} gate takes unit time~\cite{Vidal2002}, as determined by the canonical form of a two-qubit Hamiltonian~\cite{Bennett2002}.
We suppose that single-qubit operations can be performed arbitrarily fast, a common assumption~\cite{Vidal2002,Bennett2002} that is practically well-motivated due to the relative ease of implementing single-qubit rotations.}

Rather than directly engineering a quantum routing protocol,
% that works for any permutation, 
we consider a hybrid strategy that leverages a known protocol for quickly performing a specific permutation to implement general quantum routing.
Specifically, we consider the reversal operation 
\begin{equation}
  \label{eq:SWAPrb}
  \rho \coloneqq \prod\limits_{k=1}^{\lfloor\frac{n}{2}\rfloor}\swap_{k,n+1-k}
\end{equation}
that swaps the positions of qubits about the center of a length-$n$ path.
Fast quantum reversal protocols are known in the gate-based~\cite{Raussendorf2005} and time-independent Hamiltonian~\cite{spin-chain} settings.
The reversal operation can be implemented in time~\cite{spin-chain}
\begin{equation}
    \label{eq:COSTr}
    T(\rho) \le \frac{\sqrt{(n+1)^{2} - p(n)}}{3} \leq \frac{n+1}{3},
\end{equation}
where $p(n) \in \{0,1\}$ is the parity of $n$.
Both protocols exhibit an asymptotic time scaling of $n/3 + \bigo{1}$,
which is asymptotically three times faster than the best possible \swap{}-based time of $n-1$ (bounded by the diameter of the graph)~\cite{Alon1994}. The odd-even sort algorithm provides a nearly tight time upper bound of $n$~\cite{Lakshmi84} and will be our main point of comparison.

\newtext{The Hamiltonian protocol of~\cite{spin-chain} can be understood by looking at the time evolution of the site Majorana operators obtained by a Jordan-Wigner transformation of the spin chain. In this picture, the protocol can be interpreted as the rotation of a fictitious particle of spin $n+1/2$ whose magnetization components are in one-to-one correspondence with the Majoranas on the chain. A reversal corresponds to a rotation of the large spin by an angle of $\pi$. 
The gate-based reversal protocol~\cite{Raussendorf2005} is a special case of a quantum cellular automaton with a transition function given by the $(n+1)$-fold product of nearest-neighbor controlled-Z (\cz{}) operations---an operation that can be done 3 times faster than a \swap{} gate---and Hadamard operations.
In an open spin chain, this process spreads out local Pauli observables at site $i$ over the chain and ``refocuses'' them at site $n+1-i$ in $n+1$ steps for every $i$.
The ability to spread local observables (which is present in the gate-based and Hamiltonian protocols but not in \swap{}-based protocols) may be key to obtaining a speedup over \swap{}-based algorithms.}

\newtext{We expect both the gate-based and Hamiltonian protocols to be implementable on near-term quantum devices.
The gate-based protocol uses nearest-neighbor \cz{} gates and Hadamard gates, both of which are widely used on existing quantum platforms.
The Hamiltonian protocol involves nearest-neighbor Pauli XX interactions with non-uniform couplings, which is within the capabilities of, e.g., superconducting architectures~\cite{kjaergaard2020}.}

Routing using reversals has been studied extensively due to its applications in comparative genomics
(where it is known as \emph{sorting by reversals})~\cite{Bafna93,Sankoff95}.
% \citeauthor{Caprara97}~\cite{Caprara97} determined that finding an optimal sorting reversal sequence is \textsc{np}-hard.
References~\cite{bender-bounds,pinter-skiena-genomic,Nguyen05} present routing algorithms where, much like in our case, reversals have length-weighted costs.
However, these models assume reversals are performed sequentially,
while we assume independent reversals can be performed in parallel,
where the total cost is given by the evolution time, akin to circuit depth.
To our knowledge, results from the sequential case are not easily adaptable to the parallel setting and require a different approach.

% Routing on paths has applications to more general graphs.
% Given any graph $G$, we can implement reversals in parallel by performing them along non-intersecting paths.  
Routing on paths is a fundamental building block for routing on more general graphs.
For example, a two-dimensional grid graph is the Cartesian product of two path graphs, 
and the best known routing routine applies a path routing subroutine 3 times~\cite{Alon1994}.
A quantum protocol for routing on the path in time $cn$, for a constant $c>0$, would imply a routing time of $3cn$ on the grid. \newtext{A similar speedup follows for higher-dimensional grids.}
More generally, routing algorithms for the \emph{generalized hierarchical product} of graphs can take advantage of faster routing of the path base graph~\cite{Childs2019}.
\newtext{For other graphs, it is open whether fast reversals can be used to give faster routing protocols for general permutations.}

In the rest of this paper, we present the following results on quantum routing using fast reversals.
In \cref{routing-first-examples}, we give basic examples of using fast reversals to perform routing on general graphs
to indicate the extent of possible speedup over \swap{}-based routing,
namely a graph for which routing can be sped up by a factor of $3$, and another for which no speedup is possible.
\cref{sparse-permutations} presents algorithms for routing sparse permutations,
where few qubits are routed,
both for paths and for more general graphs.
Here, we obtain the full factor 3 speedup over \swap{}-based routing.
Then, in \cref{results}, we prove the main result that 
there is a quantum routing algorithm for the path with worst-case constant-factor advantage over any \swap{}-based routing scheme. 
Finally, in \cref{averagecase}, we show that our algorithm has average-case routing time $2n/3 + o(n)$
and any \swap{}-based protocol has average-case routing time at least $n - o(n)$.

%%%%%%%%%%%%%%%%%%%%%%%%%%%%%%%%%%%%%%%%%%%%%%%%%%%%%%%%%%%%%%%%%%%%%%%%%%%%%%%%
\section{Simple bounds on routing using reversals}\label{routing-first-examples}

Given the ability to implement a fast reversal $\rho$ with cost given by~\cref{eq:COSTr}, the largest possible asymptotic speedup of reversal-based routing over \swap{}-based routing is a factor of $3$. 
This is because the reversal operation, which is a particular permutation, cannot be performed faster than $n/3 + o(n)$, and can be performed in time $n$ classically using odd-even sort. As we now show, some graphs can saturate the factor of $3$ speedup for general permutations, while other graphs do not admit any speedup over \swap{}s.

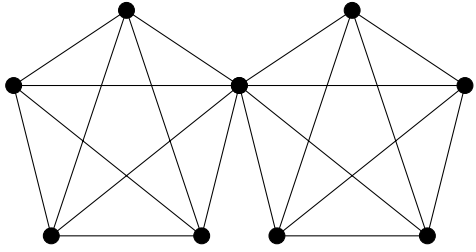
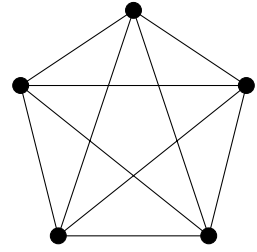
\begin{figure}
    \begin{subfigure}[b]{0.49\textwidth}
    \begin{center}
        \begin{tikzpicture}
            %% left subgraph 
            \draw[fill=black] (0.5,0) circle (3pt);
            \draw[fill=black] (2.5,0) circle (3pt);
            \draw[fill=black] (3,2) circle (3pt);
            \draw[fill=black] (1.5,3) circle (3pt);
            \draw[fill=black] (0,2) circle (3pt);
            %% vertex labels
            
            %%% edges
            \draw[] (0.5,0) -- (2.5,0) -- (3,2) -- (1.5,3) -- (0,2) -- (0.5,0) -- (3,2) -- (0,2) -- (2.5,0) -- (1.5,3) -- (0.5,0);

            % right subgraph
            %% vertices
            \draw[fill=black] (3.5,0) circle (3pt);
            \draw[fill=black] (5.5,0) circle (3pt);
            \draw[fill=black] (6,2) circle (3pt);
            \draw[fill=black] (4.5,3) circle (3pt);
            \draw[fill=black] (3,2) circle (3pt);
            %% vertex labels
            
            %%% edges
            \draw[] (3.5,0) -- (5.5,0) -- (6,2) -- (4.5,3) -- (3,2) -- (3.5,0) -- (6,2) -- (3,2) -- (5.5,0) -- (4.5,3) -- (3.5,0);
            
            % Bridge
            % \draw[] (3,2) -- (6,2);
        \end{tikzpicture}
        \end{center}
        \caption{Joined graph $K_{9}^*$.}\label{fig:joinedGraph}
        \end{subfigure}
    \begin{subfigure}[b]{0.49\textwidth}
    \begin{center}
        %\stargraph{7}{1.5}
        \Kfive
        \end{center}
        \caption{Complete graph $K_5$.}\label{fig:nospeedupgraph}
    \end{subfigure}
    \caption{%
        $K^{*}_{9}$ admits the full factor of $3$ speedup in the worst case when using reversals over \swap{}s,
        whereas $K_{5}$ admits no speedup when using reversals over \swap{}s.
    }
    \label{fig:extreme-graphs}
\end{figure}

\paragraph{Maximal speedup:} For $n$ odd, let $K^{*}_{n}$ denote two complete graphs, each on $(n+1)/2$ vertices, joined at a single ``junction" vertex for a total of $n$ vertices (\cref{fig:joinedGraph}). 
Consider a permutation on $K^{*}_{n}$ in which every vertex is sent to the other complete subgraph, except that the junction vertex is sent to itself.
To route with \swap{}s, note that each vertex (other than that at the junction) must be moved to the junction at least once, and only one vertex can be moved there at any time.
Because there are $(n+1)/2 - 1$ non-junction vertices on each subgraph,
implementing this permutation requires a \swap{}-circuit depth of at least $n - 1$.

On the other hand, any permutation on $K^{*}_{n}$ can be implemented in time $n/3 + O(1)$ using reversals.
First, perform a reversal on a path that connects all vertices with opposite-side destinations.  
After this reversal, every vertex is on the side of its destination and the remainder can be routed in at most 2 steps~\cite{Alon1994}. %since $\rt(K_{m})=2$ for all $m\in \mathbb N$
The total time is at most $(n+1)/3 + 2$, exhibiting the maximal speedup by an asymptotic factor of $3$.

\paragraph{No speedup:} Now, consider the complete graph on $n$ vertices, $K_n$ (\cref{fig:nospeedupgraph}).
Every permutation on $K_n$ can be routed in at most time 2 using \swap{}s~\cite{Alon1994}.
Consider implementing a 3-cycle on three vertices of $K_n$ for $n\geq 3$ using reversals. 
Any reversal sequence that implements this permutation will take at least time 2.
Therefore, no speedup is gained over \swap{}s in the worst case.

\bigskip

We have shown that there exists a family of graphs that allows a factor of $3$ speedup for any permutation when using fast reversals instead of \swap{}s, and others where reversals do not grant any improvement.
The question remains as to where the path graph lies on this spectrum. Faster routing on the path is especially desirable since this task is fundamental for routing in more complex graphs.

\section{An algorithm for sparse permutations} \label{sparse-permutations}

We now consider routing sparse permutations, where only a small number $k$ of qubits are to be moved.
For the path, we show that the routing time is at most $n/3 + \bigo{k^2}$. More generally, we show that for a graph of radius $r$, the routing time is at most $2r/3 + \bigo{k^2}$.
(Recall that the radius of a graph $G=(V,E)$ is $\min_{u \in V} \max_{v \in V} \dist(u,v)$, where $\dist(u,v)$ is the distance between $u$ and $v$ in $G$.)
Our approach to routing sparse permutations using reversals is based on the idea of bringing all $k$ qubits to be permuted to the center of the graph, rearranging them, and then sending them to their respective destinations.

\subsection{Paths}

A description of the algorithm on the path, called \texttt{MiddleExchange}, appears in \cref{alg:middle-exchange}. 
% Middle Exchange
    \begin{algorithm}[tbp]
        \Input{%
            $\pi$, a permutation
        }
        \SetKwFunction{FMidExch}{MiddleExchange}
        \Fn{\FMidExch{$\pi$}}{% 
            identify the labels $x_1, \dots, x_k \in [n]$ to be permuted, with $x_i < x_{i+1}$ \;
            let $t$ be the largest index for which $x_t \leq \floor{n / 2}$, i.e., the last label $x_t$ left of the median \;
            \For{$i = 1$ to $t-1$}{\label{me-begin-compression}
                perform $\rho(x_i-i+1, x_{i+1}-1)$
                % perform reversal from $(x_i-i+1)$ to $(x_{i+1}-1)$
                % join the labels $x_1, \dots, x_i$ to $x_{i+1}$ using at most one reversal $\rho_i$, not containing $x_{i+1}$
            }
            \For{$j = k$ to $t + 2$}{
                perform $\rho(x_j+k-j, x_{j-1}+1)$
                % perform reversal from $(x_j+k-j)$ to $(x_{j-1}+1)$
                % join the labels $x_j, \dots, x_k$ to $x_{j-1}$ using at most one reversal $\rho_j$, not containing $x_j$
            }
            perform $\rho(x_{t} - t + 1, \floor{n/2})$ \;
            % perform reversal from $(x_{t} - t + 1)$ to $\floor{n/2}$\;
            perform $\rho(x_{t+1} + k - t - 1, \floor{n/2}+1)$ \label{me-end-compression}\;
            % perform reversal from $(x_{t+1} + k - t - 1)$ to $\floor{n/2}+1$\label{me-end-compression}\;
            $\bar \rho \gets$ the sequence of all reversals so far\;
            % using at most two reversals $\rho_t$ and $\rho_{t+1}$, move the labels $x_1, \dots, x_t$ and $x_{t+1}, \dots, x_k$ to the median \label{me-end-compression}\;
            % $\Bar{\rho} := \rho_1, \dots, \rho_{t-1}, \rho_k, \dots, \rho_{t+2}, \rho_t, \rho_{t+1}$
            % \tcp*{The reversals performed so far, in order}
            route the labels $x_1, \dots, x_k$ such that after performing $\bar{\rho}$ in reverse order, each label is at its destination \label{me-inner-perm} \;
            perform $\bar \rho$ in reverse order \label{me-dilation}\;
        }
       \caption{MiddleExchange algorithm to sort sparse permutations on the path graph. We let $\rho(i,j)$ denote a reversal on the segment starting at $i$ and ending at $j$, inclusive.}\label{alg:middle-exchange}
    \end{algorithm}
\cref{fig:middle-exchange} presents an example of \texttt{MiddleExchange} for $k = 6$.
\begin{figure}
    \centering
        \begin{equation*}
    \begin{gathered}
        \bullets \underbracket{\texttt{5} \bullets} \texttt{3} \bullets \texttt{1} \bullets \bigg{|} \bullets \texttt{4} \bullets \texttt{6} \underbracket{\bullets \texttt{2}} \bullets \\
        \bullets \; \bullets \underbracket{\texttt{5} \: \texttt{3} \bullets} \texttt{1} \bullets \bigg{|} \bullets \texttt{4} \underbracket{\bullets \texttt{6} \: \texttt{2}} \bullets \; \bullets \\
        \bullets \; \bullets \; \bullets \underbracket{\texttt{3} \: \texttt{5} \: \texttt{1} \bullets} \bigg{|} \underbracket{\bullets \texttt{4} \: \texttt{2} \: \texttt{6}} \bullets \; \bullets \; \bullets \\
        \bullets \; \bullets \; \bullets \; \bullets \underbrace{\texttt{1} \: \texttt{5} \: \texttt{3} \: \bigg{|} \:  \texttt{6} \: \texttt{2} \: \texttt{4} }_\text{rearrange} \bullets \; \bullets \; \bullets \; \bullets \\
        \bullets \; \bullets \; \bullets \underbracket{\bullets \texttt{3} \: \texttt{1} \: \texttt{2}} \bigg{|} \underbracket{\texttt{5} \: \texttt{6} \: \texttt{4} \bullets} \bullets \; \bullets \; \bullets \\
        \bullets \; \bullets \underbracket{\bullets \texttt{2} \: \texttt{1}} \texttt{3} \bullets \bigg{|} \bullets \texttt{4} \underbracket{\texttt{6} \: \texttt{5} \bullets} \bullets \; \bullets \\
        \bullets \underbracket{\bullets \texttt{1}} \texttt{2} \bullets \texttt{3} \bullets \bigg{|} \bullets \texttt{4} \bullets \texttt{5} \underbracket{\texttt{6} \bullets} \bullets \\
        \bullets \texttt{1} \bullets \texttt{2} \bullets \texttt{3} \bullets \bigg{|} \bullets \texttt{4} \bullets \texttt{5} \bullets \texttt{6} \bullets
    \end{gathered}
    \end{equation*}
    \caption{Example of \texttt{MiddleExchange} (\cref{alg:middle-exchange}) on the path for $k = 6$.}
    \label{fig:middle-exchange}
\end{figure}

In \cref{sparse-lower-bound}, we prove that \cref{alg:middle-exchange} achieves a routing time of asymptotically $n/3$ when implementing a sparse permutation of $k = o(\sqrt{n})$ qubits on the path graph. 
First, let $\mathcal{S}_{n}$ denote the set of permutations on $\{1,\ldots,n\}$, so $|\mathcal{S}_{n}| = n!$.
Then, for any permutation $\pi \in \mathcal{S}_n$ that acts on a set of labels $\{1, \dots, n\}$, let $\pi_i$ denote the destination of label $i$ under $\pi$.
We may then write $\pi = (\pi_1, \pi_2, \dots, \pi_n)$. Let $\Bar{\rho}$ denote an ordered series of reversals $\rho_{1},\ldots,\rho_{m}$, and let $\Bar{\rho}_{1} \doubleplus \Bar{\rho}_{2}$ be the concatenation of two reversal series. Finally, let $S \cdot \rho$ and $S \cdot \Bar{\rho}$ denote the result of applying $\rho$ and $\Bar{\rho}$ to a sequence $S$, respectively, and let $\length{\rho}$ denote the length of the reversal $\rho$, i.e., the number of vertices it acts on.

\begin{theorem} \label{sparse-lower-bound}
    Let $\pi \in \mathcal{S}_n$ with $k = |\{x \in [n] \mid \pi_x \not= x\}|$ (i.e., $k$ elements are to be permuted, and $n-k$ elements begin at their destination).
    Then \cref{alg:middle-exchange} routes $\pi$ in time at most $n / 3 + \bigo{k^2}$.
\end{theorem}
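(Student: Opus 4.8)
The plan is to analyze the three phases of \texttt{MiddleExchange} separately: (i) the \emph{compression} phase (lines \ref{me-begin-compression}--\ref{me-end-compression}), which gathers the $k$ active labels into a contiguous block of size $k$ straddling the median; (ii) the \emph{rearrangement} phase (line \ref{me-inner-perm}), which permutes those $k$ labels among themselves; and (iii) the \emph{dilation} phase (line \ref{me-dilation}), which runs the compression reversals in reverse to send each label to its destination. By construction the composition of the three phases is exactly $\pi$, so correctness amounts to checking that phase (i) really does compress the active labels to the center and that the $k$-label permutation called for in phase (ii) is well-defined; this is a matter of tracking indices through the two \texttt{for} loops and the two ``boundary'' reversals $\rho(x_t-t+1,\lfloor n/2\rfloor)$ and $\rho(x_{t+1}+k-t-1,\lfloor n/2\rfloor+1)$, which I would verify by an invariant argument (after processing label $x_i$ the first $i$ left-active labels occupy positions $\lfloor n/2\rfloor-i+1,\dots$ in order, and symmetrically on the right), illustrated already by \cref{fig:middle-exchange}.

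For the timing bound, the key observation is that every reversal used in phases (i) and (iii) is \emph{short}: the reversal $\rho(x_i-i+1,x_{i+1}-1)$ in the first loop has length $x_{i+1}-x_i+i-1 \le n$ a priori, but because consecutive active labels need only be pulled together past the passive labels lying between them and the already-formed block, each such reversal has length at most (gap to next active label) $+\,\bigo{k}$. Summing the lengths telescopes: the gaps between consecutive active labels on the left sum to at most $\lfloor n/2\rfloor$, and likewise on the right, so the \emph{total} length of all compression reversals is at most $n/2 + \bigo{k^2}$ — wait, that is not yet $n/3$. The right way to get the $n/3$ is to recall from \cref{eq:COSTr} that a reversal of length $\ell$ costs only $T(\rho)\le (\ell+1)/3$, so the cost of phase (i) is at most $\tfrac13\sum_i(\length{\rho_i}+1)$. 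Here the reversals within a single \texttt{for} loop act on \emph{overlapping but nested-to-the-center} segments; however, since we pay per reversal sequentially, what matters is $\sum_i \length{\rho_i}$. Bounding this sum: the $i$-th left reversal spans from just after the growing block to the next active label, so its length is $(x_{i+1}-x_i) + \bigo{1}$, and $\sum_{i=1}^{t-1}(x_{i+1}-x_i) = x_t - x_1 \le n/2$; the two boundary reversals and the right loop contribute another $\le n/2 + \bigo{k}$. Thus phases (i) and (iii) each cost at most $\tfrac13(n + \bigo{k})$...

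Let me restate the intended accounting more carefully, since the factor is the crux. The reversals in the left \texttt{for} loop are \emph{not} disjoint, but each one, after the previous ones have run, pulls the newest active label leftward across only the passive qubits separating it from the block; a reversal that moves a label a distance $d$ while dragging along an already-sorted block of size $\bigo{k}$ has length $d+\bigo{k}$ and hence cost $\le (d+\bigo{k}+1)/3$. The distances $d$ over the whole left loop sum to at most $\lfloor n/2\rfloor - t + 1 \le n/2$ (the total amount of leftward travel available), and symmetrically the right loop contributes $\le n/2$, for a grand total travel of at most $n$, giving compression cost $\le n/3 + \bigo{k^2}$ (the $\bigo{k^2}$ absorbing $k$ reversals each with an $\bigo{k}$ additive overhead). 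Phase (iii) is the reverse of phase (i) and costs the same, $n/3 + \bigo{k^2}$ — this would give $2n/3 + \bigo{k^2}$, not $n/3 + \bigo{k^2}$, so the real argument must instead observe that phases (i) and (iii) \emph{together} move each active label a total distance at most $n$ (it goes in and comes back out, but ``in'' plus ``out'' is bounded by source-to-center plus center-to-destination, which is at most $n$ in total across \emph{all} labels by a charging argument), making the combined cost $n/3 + \bigo{k^2}$, with phase (ii) costing only $\bigo{k}$ since it permutes $k$ adjacent qubits via odd-even sort or a constant number of length-$\bigo{k}$ reversals. The main obstacle, and the step I would spend the most care on, is exactly this charging argument: showing that the summed lengths of the compression reversals \emph{plus} the summed lengths of the dilation reversals is at most $n + \bigo{k^2}$ rather than $2n$, by pairing each reversal in phase (i) with its mirror in phase (iii) and bounding the pair's total length by the net displacement it is ultimately responsible for, which is telescoped against the positions $x_1<\dots<x_k$ on one side and $\pi_{x_1},\dots,\pi_{x_k}$ on the other.
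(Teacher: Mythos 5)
There is a genuine gap, and it is exactly at the step you flag as the one needing the most care. The paper's proof does not use (and could not use) a charging argument showing that the compression and dilation reversals together have total length at most $n + \bigo{k^2}$; that claim is false. Take $k=2$ with $\pi$ exchanging the two endpoints of the path: compression uses a left reversal of length about $n/2$ and a right reversal of length about $n/2$, and dilation repeats both, so the combined reversal length is about $2n$. Any purely sequential accounting at cost $(\length{\rho}+1)/3$ per reversal therefore cannot beat roughly $2n/3$, which is where your own estimate correctly lands before the attempted fix. The missing idea is \emph{parallelism between the two halves of the path}: every compression reversal either lies entirely in positions $\le \floor{n/2}$ or entirely in positions $\ge \floor{n/2}+1$, so the left-side and right-side reversal sequences have disjoint supports and are executed simultaneously. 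The paper bounds the left-side compression cost by $\frac{1}{3}\sum_{i=1}^{t}(\length{\rho_i}+1)$, which telescopes (via $d_i = x_{i+1}-x_i-1$ and $\length{\rho_i}=d_i+i$) to $\frac{1}{3}(\floor{n/2}-x_1) + \bigo{t^2} \le n/6 + \bigo{k^2}$, with the symmetric bound on the right; since the two sides run in parallel, compression costs $n/6+\bigo{k^2}$, dilation (its inverse) costs the same, and the inner permutation costs at most $k$, giving $2(n/6+\bigo{k^2})+k = n/3+\bigo{k^2}$.

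In short: your phase decomposition and correctness sketch match the paper, and your sequential bound of $n/3+\bigo{k^2}$ \emph{per} phase is fine as far as it goes, but the route you propose to collapse $2n/3$ down to $n/3$ (pairing each compression reversal with its mirror in dilation and bounding the pair by a net displacement) cannot be repaired, since in the worst case the displacements genuinely sum to about $2n$. Replace that step by the observation that within each of compression and dilation the left-half and right-half reversals are support-disjoint and concurrent, so each phase costs only $\frac{1}{3}\cdot\frac{n}{2}+\bigo{k^2}$.
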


\begin{proof}
    \cref{alg:middle-exchange} consists of three steps: compression (\cref{me-begin-compression}--\cref{me-end-compression}), inner permutation (\cref{me-inner-perm}), and dilation (\cref{me-dilation}). 
    Notice that compression and dilation are inverses of each other.
    
    Let us first show that \cref{alg:middle-exchange} routes $\pi$ correctly.
    Just as in the algorithm, let $x_1, \dots, x_k$ denote the labels $x \in [n]$ with $x_i < x_{i+1}$ such that $\pi_x \not= x$, that is, the elements that do not begin at their destination and need to be permuted.
    It is easy to see that these elements are permuted correctly:
    After compression, the inner permutation step routes $x_{i}$ to the current location of the label $\pi_{x_{i}}$ in the middle. 
    Because dilation is the inverse of compression, it will then route every $x_{i}$ to its correct destination.
    For the non-permuting labels, notice that they lie in the support of either no reversal or exactly two reversals, $\rho_{1}$ in the compression step and  $\rho_{2}$ in the dilation step.
    Therefore $\rho_{1}$ reverses the segment containing the label and $\rho_{2}$ re-reverses it back into place (so $\rho_{1} = \rho_{2}$).
    Therefore, the labels that are not to be permuted end up exactly where they started once the algorithm is complete.
   
    Now we analyze the routing time.
    Let $d_i = x_{i+1} - x_i - 1$ for $i \in [k-1]$.
    As in the algorithm, let $t$ be the largest index for which $x_t \leq \floor{n/2}$.
    Then, for $1 \leq i \leq t - 1$, we have $\length{\rho_i} = d_i + i$, and, for $t+2 \leq j \leq k$, we have $\length{\rho_j} = d_{j-1} + k - j$.
    Moreover, we have $\length{\rho_t} = \floor{n / 2} - x_t - 1 + t$ and $\length{\rho_{t+1}} = x_{t+1} - \floor{n / 2} + k - t$.
    From all reversals in the first part of \cref{alg:middle-exchange}, $\Bar \rho$, 
    consider those that are performed on the left side of the median (position $\lfloor n/2 \rfloor$ of the path).
    The routing time of these reversals is
    \begin{align}
    \begin{split}
        \frac{1}{3} \sum_{i = 1}^t \length{\rho_i}+1 
        &= \frac{1}{3} \left(\floor{n / 2} - x_t - 1 \right) + \frac{1}{3} \sum_{i = 1}^t \left( d_i + i +1 \right) \\
        &= \frac{t(t+1)}{6} + \frac{1}{3}\left(\floor{n/2} - x_t - 1 \right) +  \sum_{i=1}^t \left( x_{i+1} - x_i \right) \\
        &= \bigo{t^2} + \frac{1}{3} \left(\floor{n/2} - x_1\right) \\
        &\leq \frac{n}{6} + \bigo{k^2}.
    \end{split}
    \end{align}
    By a symmetric argument, the same bound holds for the compression step on the right half of the median. 
    Because both sides can be performed in parallel, the total cost for the compression step is at most $n/6 + O(k^{2})$.
    The inner permutation step can be done in time at most $k$ using odd-even sort.
    The cost to perform the dilation step is also at most $n/6 + O(k^{2})$ because dilation is the inverse of compression. 
    Thus, the total routing time for \cref{alg:middle-exchange} is at most $2(n/6 + O(k^{2})) + k = n/3 + O(k^{2})$.
\end{proof}

% This result can be generalized to any graph $G$ with $k  =  o(\sqrt{r(G)})$ qubits to be moved.

It follows that sparse permutations on the path with $k=o(\sqrt n)$ can be implemented using reversals with a full asymptotic factor of $3$ speedup.

\subsection{General graphs}

We now present a more general result for implementing sparse permutations on an arbitrary graph.

\begin{theorem} \label{thm:sparse-lower-bound-general-graph}
    Let $G = (V,E)$ be a graph with radius $r$ and $\pi$ a permutation of vertices.
    Let $S = \set{v \in V : \pi_v \not= v}$.
    Then $\pi$ can be routed in time at most $2r/3 + \bigo{|S|^2}$.
\end{theorem}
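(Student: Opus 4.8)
The plan is to reduce the general-graph case to the path case (Theorem~\ref{sparse-lower-bound}) by working inside a suitable tree. First I would fix a center vertex $c \in V$ realizing the radius, so that $\dist(c,v) \le r$ for all $v$, and take a BFS tree $\mathcal{T}$ rooted at $c$. Every vertex $v$ that is actually moved (i.e.\ $v \in S$, or $v = \pi_u$ for some $u \in S$) lies on a root-to-$v$ path in $\mathcal{T}$ of length at most $r$. The idea is the same as on the path: gather all the relevant labels near $c$ using reversals along tree paths, permute them locally, then run the gathering step in reverse to deliver each label to its destination. Since a reversal is available along any induced path, and a BFS tree gives us $O(|S|)$ such paths each of length $\le r$, this should work.

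The key steps, in order. (1) Identify the set $L$ of labels to be moved; note $|L| \le 2|S|$, and $L \cup \pi(L)$ involves at most $2|S|$ distinct vertices, each at tree-distance $\le r$ from $c$. (2) \emph{Compression}: process the vertices of $L \cup \pi(L)$ and, for each, perform a reversal along the tree path from that vertex toward $c$ that pulls its label one ``slot'' closer, exactly mirroring the two nested \texttt{for} loops of \texttt{MiddleExchange} but now along tree branches rather than along one line. As on the path, reversals living on disjoint branches of $\mathcal{T}$ commute and can be executed in parallel; the cost of the reversals used to gather a single label is, by \cref{eq:COSTr}, at most $\tfrac13\sum(\ell_i+1)$ where the $\ell_i$ telescope along the branch, giving $r/3 + \bigo{|S|}$ per label and $r/3 + \bigo{|S|^2}$ total since at most $O(|S|)$ labels share any branch. (3) \emph{Inner permutation}: all labels now sit within an $\bigo{|S|}$-radius ball around $c$, which contains a connected subgraph of $\bigo{|S|}$ vertices; route the induced permutation there in time $\bigo{|S|}$ (e.g.\ embed a path and use odd-even sort, or invoke $\rnumber{\cdot}$ of the small subgraph, which is $\bigo{|S|}$). (4) \emph{Dilation}: perform the compression reversal sequence $\bar\rho$ in reverse, which has the same cost $r/3 + \bigo{|S|^2}$ and, being the inverse of compression, sends each moved label to its destination while returning every untouched label to its start (each such label lies in the support of either no reversal or a matched pair, as in the path proof). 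Summing: $2(r/3 + \bigo{|S|^2}) + \bigo{|S|} = 2r/3 + \bigo{|S|^2}$.

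The main obstacle I expect is bookkeeping in step (2): on the path the ``slots'' for the gathered labels form a single contiguous block around the median, but in a tree several branches feed into $c$, so I need to choose, for each label, a target slot in the ball around $c$ and an order of reversals so that (a) distinct labels never collide, (b) a reversal gathering one label does not disturb an already-placed label on the same branch in an unrecoverable way, and (c) the whole gathering sequence is invertible as a single block $\bar\rho$. The cleanest way to handle this is to first route, within $\mathcal{T}$, so that all of $L \cup \pi(L)$ lie on a single root path (a caterpillar): repeatedly reverse a branch into the trunk; this still costs $\le \tfrac{r}{3} + \bigo{|S|^2}$ since each of the $\bigo{|S|}$ branches has length $\le r$ but the ones we must fully traverse telescope, and thereafter steps (2)--(4) are literally the path algorithm applied to that trunk. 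The remaining care is just to verify that the telescoping still yields $r/3$ and not something like $|S| \cdot r / 3$; this follows because the gathered labels end up within distance $\bigo{|S|}$ of $c$, so the reversal endpoints on each branch differ from the full branch length only by $\bigo{|S|}$ terms, exactly as in the computation displayed in the proof of \cref{sparse-lower-bound}.
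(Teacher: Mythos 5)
Your high-level skeleton matches the paper's proof of \cref{thm:sparse-lower-bound-general-graph}: pick a center $c$, take a shortest-path (BFS) tree, compress the moved labels toward $c$ by reversals, do an $\bigo{k}$ inner permutation, and dilate by running the compression sequence in reverse; your correctness argument (place each token at the post-compression location of the label that started at its destination, untouched labels are restored because dilation inverts compression) is the same as in \cref{sparse-lower-bound} and is fine. The genuine gap is in the only nontrivial part: establishing that compression costs $r/3 + \bigo{k^2}$. Your accounting ``$r/3+\bigo{|S|}$ per label, hence $r/3+\bigo{|S|^2}$ in total since at most $\bigo{|S|}$ labels share any branch'' does not follow as stated: telescoping bounds the reversals gathering labels \emph{within one branch}, but the reversals that merge different branches must share the vertices at and above the branch points, so they cannot all be run in parallel, and nothing in your sketch bounds how much sequential interleaving this forces. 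The paper's proof is essentially an answer to exactly this question (\cref{alg:move-up-to}): gathering inside vertex-disjoint subtrees happens in parallel; the only sequential dependency is along a single root-to-leaf chain of intersection vertices, of which there are at most $k-1$; each merge costs only $\bigo{k}$ extra because the tokens arriving at an intersection have already been compacted to within distance $\bigo{k}$ of it (by a Routing-via-Matchings step); and the lengths of the long merging reversals telescope along that chain to at most $r$, giving the recursion $T(w_1)\le T(w_2)+(d_{w_1}-d_{w_2})/3+\bigo{k}$. You name the ingredients (parallelism, telescoping) but never assemble them into such an argument, and that assembly is the content of the theorem.

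Your proposed repair—first make the configuration a caterpillar so that all moved labels lie on a single root path, then run the path algorithm on that trunk—also breaks in identifiable cases. A root path has only $r+1$ vertices, so when $k>r$, and more locally whenever the trunk segment available at a branch point is shorter than the number $m$ of tokens to be merged there, the tokens simply cannot be parked on the trunk; this is precisely why the paper distinguishes the cases $l(p)\ge m$ and $l(p)<m$ (\cref{btc:case1}--\cref{btc:case2}) and, in the second case, keeps the tokens as a compact \emph{subtree} near the merge vertex and performs the inner permutation by tree routing rather than on a path segment. Moreover, a branch is generally not a path (branches branch), so ``reverse a branch into the trunk'' is not a single reversal but a recursive subproblem of the same kind, and the claim that the fully traversed branches ``telescope'' to $r/3+\bigo{|S|^2}$ is asserted rather than derived. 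Once you add the recursion over branch points, the room case-analysis, and the per-merge $\bigo{k}$ accounting with at most $k$ merges, you have reconstructed the paper's argument; without them, the stated bound is not proved.
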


\begin{proof}
We route $\pi$ using a procedure similar to \cref{alg:middle-exchange}, consisting of the same three steps adapted to work on a spanning tree of $G$: compression, inner permutation, and dilation.
Dilation is the inverse of compression and the inner permutation step can be performed on a subtree
consisting of just $k=|S|$ nodes by using the \RoutingViaMatchings{} algorithm for trees in $3k/2 + \bigo{\log k}$ time ~\cite{Zhang1999}.
It remains to show that compression can be performed in $r/3 + \bigo{k^2}$ time.

We construct a \emph{token tree} $\mathcal T$ that reduces the compression step to routing on a tree.
Let $c$ be a vertex in the \emph{center} of $G$, i.e., a vertex with distance at most $r$ to all vertices.
Construct a shortest-path tree $\mathcal T'$ of $G$ rooted at $c$, say, using breadth-first search.
We assign a token to each vertex in $S$.
Now $\mathcal T$ is the subtree of $\mathcal T'$ formed by removing all vertices $v \in V(\mathcal T')$ for which the subtree rooted at $v$ does not contain any tokens, as depicted in \cref{fig:shortestpathtree}.
In $\mathcal T$, call the first common vertex between paths to $c$ from two distinct tokens an \emph{intersection} vertex,
and let $\mathcal I$ be the set of all intersection vertices.
Note that if a token $t_1$ lies on the path from another token $t_2$ to $c$,
then the vertex on which $t_1$ lies is also an intersection vertex.
Since $\mathcal T$ has at most $k$ leaves, $\abs{\mathcal I} \le k-1$.

% Let $c$ be a vertex in the \emph{center} of $G$, i.e., a vertex with distance at most $r$ to all vertices.
% Using, e.g., breadth-first search, construct a \emph{shortest-path tree} $\mathcal T'$ for $G$ rooted at $c$,
% that is, a spanning tree of $G$ such that the path from each vertex $v$ to $c$ on $\mathcal T'$ is a shortest path.
% Since we are only interested in the vertices in $Q$, consider the subgraph $\mathcal T$ of $\mathcal T'$ with all subtrees that do not contain vertices in $Q$ removed (\cref{fig:shortestpathtree}). 

% There are at most $k-1$ vertices in $\mathcal T$ with degree strictly greater than 2 that we call \emph{intersections}.
% Let $v_1, \dots, v_j$, $j \leq k - 1$, be the intersections ordered such that $d(v_1, c) \ge \dots \ge d(v_j, c)$,
% for distance function $d$.

\begin{figure}
    \centering
    \input{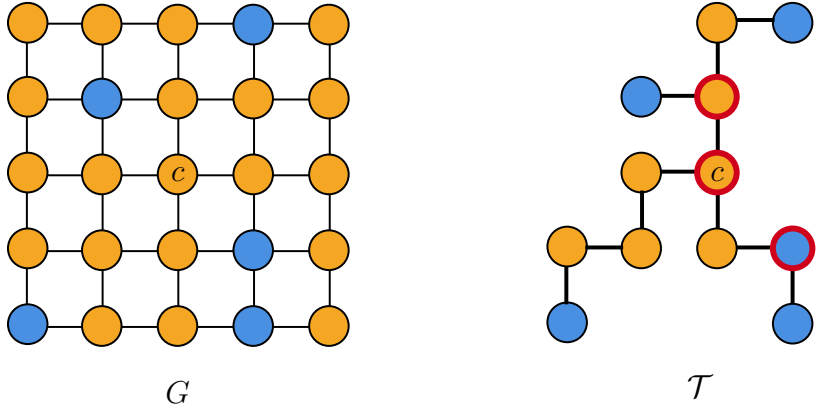}
    \caption{Illustration of the token tree $\mathcal T$ in \cref{thm:sparse-lower-bound-general-graph} for a case where $G$ is the $5 \times 5$ grid graph. Blue circles represent vertices in $S$ and orange circles represent vertices not in $S$. Vertex $c$ denotes the center of $G$. Red-outlined circles represent intersection vertices. In particular, note that one of the blue vertices is an intersection because it is the first common vertex on the path to $c$ of two distinct blue vertices.}
    \label{fig:shortestpathtree}
\end{figure}

For any vertex $v$ in $\mathcal T$, let the \emph{descendants} of $v$ be the vertices $u \not= v$ in $\mathcal T$ whose path on $\mathcal T$ to $c$ includes $v$.
Now let $\mathcal T_v$ be the subtree of $\mathcal T$ rooted at $v$, i.e., the tree composed of $v$ and all of the descendants of $v$.
We say that all tokens have been \emph{moved up to} a vertex $v$ if for all vertices $u$ in $\mathcal T_v$ without a token, $\mathcal \mathcal T_u$ also does not contain a token.
The compression step can then be described as moving tokens up to $c$.

% TBS
    \begin{algorithm}[tbp]
        \Input{
            A vertex $v$ in token tree $\mathcal T$
        }
        \SetKwFunction{FTripSort}{MoveUpTo}
        \Fn{\FTripSort{$v$}}{%
            \If(\tcp*[f]{Base case}\label{btc:base-case}){$\mathcal T_v$ contains no intersection vertices other than $v$}{
                \For{each leaf node $u \in V(\mathcal T_v)$}{
                    \If{$u$ is the first leaf node}{
                        Perform reversal from $u$ to $v$.
                    }
                    \Else{
                        Perform reversal from $u$ to $v$, exclusive.
                    }
                }
                \Return{}
            }
            
            \For{each descendant $b$ of $v$}{ \label{btc:forloop}
                $w := $ the intersection vertex in $\mathcal T_b$ closest to $b$ \tcp*{may include $b$} \label{btc:w}
                $\texttt{MoveUpTo}(w)$\; \label{btc:recursive-call}
                $m := $ the number of tokens from $S$ in $\mathcal T_b$\;
                $l(p) := $ the length of the path $p$ from $w$ to $b$ in $\mathcal T_v$\;
                \If(\tcp*[f]{Enough room on $p$, form a path of tokens at $b$}){$l(p) \geq m$}{\label{btc:case1}
                    Route the $m$ tokens in $\mathcal T_b$ to the first $m$ vertices of $p$ using \RoutingViaMatchings{}.\; \label{btc:routingviamatchings1}
                    Perform a reversal on the segment starting at $w$ and ending at $b$. \label{btc:moveuptob}
                }
                \Else(\tcp*[f]{Not enough room on $p$, form a tree of tokens rooted at $b$}){
                    Route the $m$ tokens in $\mathcal T_b$ as close as possible to $b$ using \RoutingViaMatchings{}.\; \label{btc:routingviamatchings2}
                }\label{btc:case2}
            } 
            \If(\tcp*[f]{Put token on root $v$}){$v$ has no token}{ \label{btc:step1}
                Perform a reversal on the segment starting from $v$ and ending at a vertex $u$ in $\mathcal T_v$ with a token such that no descendant of $u$ has a token.\label{btc:finalrev}
            }
        }
       \caption{An algorithm that recursively moves all tokens from $S$ that lie on $\mathcal T_v$ up to an intersection vertex $v$.}\label{alg:move-up-to}
    \end{algorithm}

\begin{figure}
\begin{tabular}{c|c}
    \begin{subfigure}[b]{0.49\textwidth}
        
        \begin{center}
        \resizebox{0.8\columnwidth}{!}
            {
            \input{plots/case1}  }  
        \end{center}
        \caption{When $l(p) \geq m$. In this case, $l(p) = 7 \geq 5 = m$.}\label{fig:btc-case1}
    \end{subfigure}
    &
    \begin{subfigure}[b]{0.49\textwidth}
        \begin{center}
        \resizebox{0.7\columnwidth}{!}
            {
            \input{plots/case2}}
        \end{center}
        \caption{When $l(p) < m$. In this case, $l(p) = 3 < 5 = m$.}\label{fig:btc-case2}
    \end{subfigure}
\end{tabular}
\caption{%
    An example of moving the $m$ tokens in $\mathcal T_w$ up to $b$ (\cref{btc:case1}--\cref{btc:case2} in \cref{alg:move-up-to}).
}\label{fig:btc}
\end{figure}

We describe a recursive algorithm for doing so in \cref{alg:move-up-to}.
The base case considers the trivial case of a subtree with only one token.
Otherwise, we move all tokens on the subtrees of descendant $b$ up to the closest intersection $w$ using recursive calls
as illustrated in \cref{fig:btc}.
Afterwards, we need to consider whether the path $p$ between $v$ and $w$ has enough room to store all tokens.
If it does, we use a \RoutingViaMatchings{} algorithm for trees to route tokens from $w$ onto $p$,
followed by a reversal to move these tokens up to $v$.
Otherwise, the path is short enough to move all tokens up to $v$ by the same \RoutingViaMatchings{} algorithm.

We now bound the routing time on $\mathcal T_{w_1}$ of \FTripSort{$w_1$}, for any vertex $w_1 \in V(\mathcal T)$.
First note that all operations on subtrees $\mathcal T_b$ of $\mathcal T_{w_1}$ are independent and can be performed in parallel.
Let $w_1, w_2, \dots, w_t$ be the sequence of intersection vertices that \FTripSort{$\cdot$} is recursively called on that dominates the routing time of \FTripSort{$w_1$}.
Let $d_w$, for $w \in V(\mathcal T_{w_1})$, be the distance of $w$ to the furthest leaf node in $\mathcal T_w$.
Assuming that the base case on \cref{btc:base-case} has not been reached,
we have a routing time of
\begin{equation}
    T(w_1) \leq T(w_2) + \frac{d_{w_1}-d_{w_2}}{3} + \bigo{k},
\end{equation}
where $\bigo{k}$ bounds the time required to route $m \le k$ tokens on a tree of size at most $2m$ following the recursive \FTripSort{$w_2$} call~\cite{Zhang1999}.
We expand the time cost $T(w_i)$ of recursive calls until we reach the base case of $w_t$ to obtain
\begin{equation}
    T(v) \le T(w_t) + \sum_{i=1}^{t-1}\left( \frac{d_{w_i} - d_{w_{i+1}}}{3} + \bigo{k}\right) = T(w_t) + \frac{d_{w_1} - d_{w_t}}{3} + t\cdot \bigo{k} \le \frac{d_{w_1}}{3} + (t+1)\bigo{k}.
\end{equation}
Since $d_v \le r$ and $t \le k$, this shows that compression can be performed in $r/3 + \bigo{k^2}$ time.
\end{proof}

In general, a graph with radius $r$ and diameter $d$ will have $d/2 \leq r \leq d$.
Using \cref{thm:sparse-lower-bound-general-graph}, this implies that for a graph $G$ and a sparse permutation with $k = o(\sqrt{r})$, the bound for the routing time will be between $d/3 + o(d)$ and $2d/3 + o(d)$.
Thus, for such sparse permutations, using reversals will always asymptotically give us a constant-factor worst-case speedup over any \swap-only protocol since $\rt(G) \geq d$.
Furthermore, for graphs with $r = d/2$, we can asymptotically achieve the full factor of 3 speedup.

%%%%%%%%%%%%%%%%%%%%%%%%%%%%%%%%%%%%%%%%%%%%%%%%%%%%%%%%%%%%%%%%%%%%%%%%%%%%%%%%
\section{Algorithms for routing on the path} \label{results}

% Generic Divide Conquer 
\begin{algorithm}[tbp] % I prefer top/bottom/float page. Be cautious with "[H]ere".
    % Define constraints on the input. What it is.
    \Input{%
        $\pi$, a permutation of a contiguous subset of $[n]$.
    }
    % Define functions. The first argument becomes a macro (\FDivConquer in this case)
    \SetKwFunction{FDivConquer}{GenericDivideConquer}
    % Define the function
    \Fn{\FDivConquer{\begin{tt}BinarySorter\end{tt}, $\pi$}}{
        \If{$\length{\pi}$ = 1}{%
            \Return $\emptyset$\;
        }
        $B := \texttt{BinaryLabeling}(\pi)$\;
        $\overline{\rho} := \texttt{BinarySorter}(B)$\;
        $\pi := \pi \cdot \overline{\rho}$\;
        % Can use defined functions (SetKwFunction) as function calls
        % Note that we have to exit math mode :(
        $\overline{\rho} = \overline{\rho} \doubleplus \mbox{\FDivConquer{\text{\texttt{BinarySorter}}, $\pi[0,\floor*{\frac{n}{2}}]$}}$\;
        $\overline{\rho} = \overline{\rho} \doubleplus \mbox{\FDivConquer{\text{\texttt{BinarySorter}}, $\pi[\floor*{\frac{n}{2}}+1, |\pi|]$}}$\;
        \Return $\overline{\rho}$
    }
    \caption{Divide-and-conquer algorithm for recursively sorting $\pi$. $\texttt{BinaryLabeling}(\pi)$ is a subroutine that uses \cref{eq:indicator} to transform $\pi$ into a bitstring, and $\texttt{BinarySorter}$ is a subroutine that takes as input the resulting binary string and returns an ordered reversal sequence $\Bar{\rho}$ that sorts it.}%
    \label{alg:divconqgeneric}
\end{algorithm}

% Concatenating two reversal series is written as $\Bar{\rho_{1}} \doubleplus \Bar{\rho_{2}}$. Applying $\rho$ or $\Bar{\rho}$ to a sequence $S$ is written as $S \cdot \rho$.

Our general approach to implementing permutations on the path relies on the divide-and-conquer strategy described in \cref{alg:divconqgeneric}. It uses a correspondence between implementing permutations and sorting binary strings, where the former can be performed at twice the cost of the latter. This approach is inspired by \cite{pinter-skiena-genomic} and \cite{bender-bounds} who use the same method for routing by reversals in the sequential case.

First, we introduce a binary labeling using the indicator function
\begin{equation}\label{eq:indicator}
	I(v) = \begin{cases*}
		0 & if $v < n/2$, \\
		1 & otherwise.
	\end{cases*}
\end{equation}
This function labels any permutation $\pi\in\mathcal{S}_n$ by a binary string $I(\pi) \coloneqq (I(\pi_1), I(\pi_2), \dots, I(\pi_n))$.  
Let $\pi$ be the target permutation, and $\sigma$ any permutation such that $I(\pi\sigma^{-1}) = (0^{\lfloor n/2\rfloor} 1^{\lceil n/2\rceil})$. Then it follows that $\sigma$ divides $\pi$ into permutations $\pi_L,\pi_R$ acting only on the left and right halves of the path, respectively, i.e., $\pi=\pi_L\cdot\pi_R\cdot\sigma$. We find and implement $\sigma$ via a binary sorting subroutine, thereby reducing the problem into two subproblems of length at most $\lceil n/2\rceil$ that can be solved in parallel on disjoint sections of the path. Proceeding by recursion until all subproblems are on sections of length at most 1, the only possible permutation is the identity and $\pi$ has been implemented. 
Because disjoint permutations are implemented in parallel, the total routing time is $T(\pi) = T(\sigma) + \max(T(\pi_L),T(\pi_R))$.

We illustrate \cref{alg:divconqgeneric} with an example, where the binary labels are indicated below the corresponding destination indices:
\begin{equation} \label{eq:divide-conquer-example}
    \begin{gathered}
    \texttt{7\;6\;0\;2\;5\;1\;3\;4\;} 
        \xrightarrow{\text{\footnotesize{label}}}
    \underwrite[0.5pt]{\texttt{7\;6\;0\;2\;5\;1\;3\;4\;}}{\texttt{1\;1\;0\;0\;1\;0\;0\;1\;}} 
        \xrightarrow{\text{\footnotesize{sort}}} 
    \underwrite[0.5pt]{\texttt{0\;3\;1\;2\;5\;7\;6\;4\;}}{\texttt{0\;0\;0\;0\;1\;1\;1\;1\;}} 
        \xrightarrow{\text{\footnotesize{label}}}
    \underwrite[0.5pt]{\texttt{0\;3\;1\;2} \quad \texttt{5\;7\;6\;4\;}}{\texttt{0\;1\;0\;1} \quad \texttt{0\;1\;1\;0\;}} \\
        \hspace{10cm} \downarrow \text{\footnotesize{ sort }} \\
    \underwrite[0.5pt]{\texttt{0\;1} \quad \texttt{2\;3} \quad \texttt{4\;5} \quad \texttt{6\;7\;}}{\texttt{0\;1} \quad \texttt{0\;1} \quad \texttt{0\;1} \quad \texttt{0\;1\;}}
        \xleftarrow{\text{\footnotesize{sort}}}
    \underwrite[0.5pt]{\texttt{0\;1} \quad \texttt{3\;2} \quad \texttt{5\;4} \quad \texttt{6\;7\;}}{\texttt{0\;1} \quad \texttt{1\;0} \quad \texttt{1\;0} \quad \texttt{0\;1\;}}
        \xleftarrow{\text{\footnotesize{label}}}
    \underwrite[0.5pt]{\texttt{0\;1\;3\;2} \quad \texttt{5\;4\;6\;7\;}}{\texttt{0\;0\;1\;1} \quad \texttt{0\;0\;1\;1\;}}
    \end{gathered}
\end{equation}
Each labeling and sorting step corresponds to an application of \cref{eq:indicator} and \texttt{BinarySorter}, respectively, to each subproblem. Specifically, in \cref{eq:divide-conquer-example}, we use TBS (\cref{alg:TBS}) to sort binary strings.

% TBS
    \begin{algorithm}[tbp]
        \Input{
            $B$, a binary string 
        }
        \SetKwFunction{FTripSort}{TripartiteBinarySort}
        \Fn{\FTripSort{$B$}}{%
            \If{$\length{B} = 1$}{%
                \Return $\emptyset$
            }
            $m_{1} := \left\lfloor\frac{|B|}{3}\right\rfloor$\;
            $m_{2} := \left\lfloor\frac{2|B|}{3}\right\rfloor$\;
            $\overline{\rho} := \texttt{TripartiteBinarySort}(B[0, m_{1}])$\;
            $\overline{\rho} := \overline{\rho} \doubleplus
            \texttt{TripartiteBinarySort}(B[m_{1} + 1,m_2] \oplus \texttt{11}\dots\texttt{1})$\;
            \label{line:backwardsort}
                % \tcp*{$\oplus$ denotes bitwise XOR, so we sort the  middle third backwards}
            $\overline{\rho} := \overline{\rho} \doubleplus \texttt{TripartiteBinarySort}(B[m_{2} +1, |B|])$\;
            $B \gets$ apply reversals in $\bar \rho$ to $B$\;
            $i := $ index of first \texttt{1} in $B$\;
            $j := $ index of last \texttt{0} in $B$\;
            \Return $\overline{\rho} \doubleplus \rho(i,j)$\;
        }
       \caption{Tripartite Binary Sort (TBS). We let $\rho(i,j)$ denote a reversal on the subsequence $S[i,j]$ (inclusive of $i$ and $j$). In line \ref{line:backwardsort}, $\oplus \texttt{11}\dots\texttt{1}$ indicates that we flip all the bits, so that we sort the middle third backwards.}\label{alg:TBS}
    \end{algorithm}
\newcommand{\Alg}[1]{\texttt{GDC(#1)}}

We present two algorithms for \texttt{BinarySorter},
which perform the work in our sorting algorithm.
The first of these binary sorting subroutines is Tripartite Binary Sort (TBS, \cref{alg:TBS}).
TBS works by splitting the binary string into nearly equal (contiguous) thirds, recursively sorting these thirds, and merging the three sorted thirds into one sorted sequence.
We sort the outer thirds forwards and the middle third backwards which allows us to merge the three segments using at most one reversal.
For example, we can sort a binary string as follows:
\begin{equation}
    \begin{gathered}
    \texttt{010011100011010011110111001} \\
    \texttt{010011100~~011010011~~110111001} \\
    \text{\small{TBS}} \downarrow \hspace{0.9cm} \text{\small{TBS}} \downarrow  \text{\small{backwards}} \hspace{0.9cm} \downarrow \text{\small{TBS}} \\
    \texttt{000001111~~111110000~~000111111} \\
    \texttt{00000}\underbracket{\texttt{1111111110000000}}\texttt{11111} \\
    \texttt{00000000000011111111111111},
    \end{gathered}\label{eq:binarySortingExample}
\end{equation}
where the arrows with TBS indicate recursive calls to TBS and the bracket indicates the reversal to merge the segments.
Let \Alg{TBS} denote \cref{alg:divconqgeneric} when using TBS to sort binary strings, where \texttt{GDC} stands for \texttt{GenericDivideConquer}.

The second algorithm is an adaptive version of TBS (\cref{alg:AdaptiveTBSexhaustive})
that, instead of using equal thirds, adaptively chooses the segments' length.
Adaptive TBS considers every pair of partition points, $0 \leq i \leq j < n-1$, that would split the binary sequence into two or three sections: $B[0,i]$, $B[i+1, j]$, and $B[j+1, n-1]$ (where $i = j$ corresponds to no middle section).
For each pair, it calculates the minimum cost to recursively sort the sequence using these partition points.
Since each section can be sorted in parallel, the total \emph{sorting time} depends on the maximum time needed to sort one of the three sections and the cost of the final merging reversal.
Let \Alg{ATBS} denote \cref{alg:divconqgeneric} when using Adaptive TBS to sort binary strings.

Notice that the partition points selected by TBS are considered by the Adaptive TBS algorithm and are selected by Adaptive TBS only if no other pair of partition points yields a faster sorting time.
Thus, for any permutation, the sequence of reversals found by Adaptive TBS costs no more than that found by TBS.
However, TBS is simpler to implement and will be faster than Adaptive TBS in finding the sorting sequence of reversals.

% Adaptive TBS exhaustive search
\begin{algorithm}[tbp]
        \Input{%
            $B$, a binary string 
        }
        \textbf{function} \texttt{AdaptiveTripartiteBinarySort}$(B)$: \\
        $\overline{\rho} := \emptyset$\;
        \For{$i = 0$ to $n-2$}{%
            \For{$j = i$ to $n-2$}{%
                $\overline{\rho_0} = \texttt{AdaptiveTripartiteBinarySort}(B[0,i])$\;
                $c_0 := cost(\overline{\rho_0})$\;
                $\overline{\rho_1} = \texttt{AdaptiveTripartiteBinarySort}(B[i+1,j])$\;
                $c_1 := cost(\overline{\rho_1})$\;
                $\overline{\rho_2} = \texttt{AdaptiveTripartiteBinarySort}(B[j+1,n-1])$\;
                $c_2 := cost(\overline{\rho_2})$\;
                $r := \text{ cost of merging reversal using }i\text{ and }j\text{ as partition points}$\;  %(N_1(B[0,i]) + j-i + N_0(B[j+1,n-1]) + 1) / 3
                \If{$\overline{\rho} = \emptyset$ or $\max\{c_0, c_1, c_2\} + r < cost(\overline{\rho})$}{%
                    $\overline{\rho} := \overline{\rho_0} \doubleplus \overline{\rho_1} \doubleplus \overline{\rho_2}$\;
                }
            }
        }
        \Return $\overline{\rho}$
    \caption{Adaptive TBS.
        For the sake of clarity, we implement an exhaustive search over all possible ways to choose the partition points.
        However, we note that the optimal partition points can be found in polynomial time by using a dynamic programming method~\cite{bender-bounds}. 
    }
    \label{alg:AdaptiveTBSexhaustive}
\end{algorithm}

\subsection{Worst-case bounds} \label{worstcase}
In this section, we prove that all permutations of sufficiently large length $n$ can be sorted in time strictly less than $n$
using reversals.
Let $n_x(b)$ denote the number of times character $x \in \{0,1\}$ appears in a binary string $b$, and let $T(b)$ (resp., $T(\pi)$) denote the best possible sorting time to sort $b$ (resp., implement $\pi$) with reversals.
Assume all logarithms are base 2 unless specified otherwise.

\begin{lemma} \label{TBSboundlemma}
    Let $b \in \{0,1\}^{n}$ such that $n_x(b) < c n + O(\log n)$, where $c \in [0,1/3]$ and $x \in \{0,1\}$. Then, $T(b) \leq \left(c / 3 + 7/18 \right)n + \bigo{\log n}$.
\end{lemma}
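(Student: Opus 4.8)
The plan is to prove the bound by exhibiting a concrete reversal sequence, namely the output of Tripartite Binary Sort (TBS, \cref{alg:TBS}); since $T(b)$ is the optimal sorting time for $b$, any upper bound on the running time of TBS on $b$ suffices. I would first record the unconstrained guarantee that TBS never does too badly. Let $h(m)$ denote the worst-case TBS running time over all binary strings of length $m$. At the top of a TBS call on a length-$m$ string, the three recursive calls act on disjoint contiguous thirds of length at most $\lceil m/3\rceil$, so they run in parallel, and the single merge reversal $\rho(i,j)$ acts on a segment of length at most $m$, hence costs at most $(m+1)/3$ by \cref{eq:COSTr}. Thus $h(m) \le h(\lceil m/3\rceil) + (m+1)/3$, and unrolling this recursion (of depth $\bigo{\log m}$, with one $\bigo{1}$ term per level) gives $h(m) \le m/2 + \bigo{\log m}$.

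Next I would analyze the \emph{top-level} TBS call on $b$, this time exploiting the sparsity hypothesis to control the final merge reversal. By the $0\leftrightarrow1$ symmetry assume $x=1$. Write the thirds as $b_1,b_2,b_3$, each of length at most $\lceil n/3\rceil$, put $k_i\coloneqq n_1(b_i)$ and $a_i\coloneqq n_0(b_i)=|b_i|-k_i$, so $k_1+k_2+k_3 = n_1(b) < cn+\bigo{\log n}$. After the recursive calls, the first and third thirds have been sorted into $0^{a_1}1^{k_1}$ and $0^{a_3}1^{k_3}$, while the middle third has been bit-flipped, sorted, and (implicitly) flipped back to $1^{k_2}0^{a_2}$; concatenating, the string is $0^{a_1}1^{k_1+k_2}0^{a_2+a_3}1^{k_3}$, which a single reversal of the interior block $1^{k_1+k_2}0^{a_2+a_3}$ sorts. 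This reversal has length $k_1+k_2+a_2+a_3 = k_1 + |b_2| + |b_3| - k_3 \le n_1(b) + 2\lceil n/3\rceil < cn + 2n/3 + \bigo{\log n}$, so by \cref{eq:COSTr} it costs at most $cn/3 + 2n/9 + \bigo{\log n}$.

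Finally I would combine the estimates: the three recursive calls of the top-level invocation run in parallel on strings of length at most $\lceil n/3\rceil$ and hence, by the unconstrained bound, contribute at most $h(\lceil n/3\rceil)\le n/6+\bigo{\log n}$; adding the merge-reversal cost gives
\[
  T(b)\;\le\;\frac n6+\frac{2n}{9}+\frac{cn}{3}+\bigo{\log n}\;=\;\Bigl(\frac c3+\frac{7}{18}\Bigr)n+\bigo{\log n}.
\]
The one delicate point I anticipate is the bookkeeping in the middle paragraph: verifying that the three sorted thirds concatenate into the four-block word $0^{*}1^{*}0^{*}1^{*}$ — in particular that bit-flipping the middle third, sorting it, and flipping back yields $1^{k_2}0^{a_2}$, so that its $1$'s sit adjacent to the $1$'s inherited from $b_1$ — so that one reversal completes the sort, and that this reversal includes the $k_1$ ones of $b_1$ but none of the ones of $b_3$, which is exactly what lets the sparsity of $1$'s in $b$ bound its length. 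The $\lfloor n/3\rfloor$ versus $\lceil n/3\rceil$ rounding and any empty blocks perturb lengths by only $\bigo{1}$ per recursion level and are absorbed into the $\bigo{\log n}$ terms.
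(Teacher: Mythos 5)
Your proposal is correct and takes essentially the same route as the paper's proof: run TBS, bound the top-level merge reversal by $(c+2/3)n+\bigo{\log n}$ using the sparsity of $x$, and bound all deeper levels by their segment lengths alone (your $h(\ceil{n/3})\le n/6+\bigo{\log n}$ is exactly the paper's $\sum_{j\ge 1}\abs{\rho_j}/3\le n/6$ with $\abs{\rho_j}\le n/3^j$). Your explicit four-block bookkeeping $0^{a_1}1^{k_1+k_2}0^{a_2+a_3}1^{k_3}$ merely spells out the worst-case claim the paper asserts tersely, so the two arguments coincide in substance and constants.
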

\begin{proof}
    To achieve this upper bound, we use TBS (\cref{alg:TBS}). 
    There are $\lfloor\log_3 n\rfloor$ steps in the recursion, which we index by $j\in\{0,1,\ldots, \lfloor\log_3 n\rfloor\}$, with step 0 corresponding to the final merging step.
    Let $\length{\rho_{j}}$ denote the size of the longest reversal in recursive step $j$ that merges the three sorted subsequences of size $n/3^{j+1}$.
    The size of the final merging reversal $\rho_{0}$ can be bounded above by $(c + 2/3)n + O(\log n)$ because $\length{\rho_{0}}$ is maximized when every $x$ is contained in the leftmost third if $x = 1$ or the rightmost third if $x = 0$.
    So we have
    \begin{align}
        T(b) &\leq \left(\sum_{j=0}^{\log_{3}{n}}\frac{\length{\rho_{j}}}{3}\right) + \bigo{\log n}
        \leq \left(\frac{c}{3} + \frac{2}{9} \right)n +O(\log n) + \left(\sum_{j = 1}^{\log_{3}{n}} \frac{\abs{\rho_j}}{3}\right) + \bigo{\log n} \\
        &\leq \left( \frac{c}{3} + \frac{7}{18}\right) n + \bigo{\log n},
    \end{align}
    where we used $\abs{\rho_j} \leq n/3^j$ for $j\geq 1$.
\end{proof}

Now we can prove a bound on the cost of a sorting series found by Adaptive TBS for any binary string of length $n$.

\begin{theorem} \label{bitstringbound}
    For all bit strings $b \in \{0, 1\}^n$ of arbitrary length $n\in\mathbb{N}$,
    $T(b) \leq \left(1 / 2 - \varepsilon \right) n + \bigo{\log n} \approx 0.483n + \bigo{\log n}$,
    where $\varepsilon = 1/3 - 1/\sqrt{10}$.
\end{theorem}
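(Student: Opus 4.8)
The plan is to bound the cost of the reversal sequence produced by Adaptive TBS (\cref{alg:AdaptiveTBSexhaustive}) and to argue by strong induction on $n$. Since Adaptive TBS always selects the partition points that minimize the induced sorting time, it suffices to exhibit, for an arbitrary $b \in \{0,1\}^n$, a \emph{single} choice of partition whose cost meets the claimed bound; the constant-size base cases of the recursion are absorbed into the $\bigo{\log n}$ term, and $T(b)$ (the optimal sorting time) is no larger than what Adaptive TBS achieves. I would introduce a threshold $\gamma \in [0,1/3]$, to be fixed only at the very end, and split on the minority fraction $p := \min\{n_0(b), n_1(b)\}/n$.

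In the unbalanced regime $p \le \gamma$, \cref{TBSboundlemma} applies directly with $c = \gamma$ and gives $T(b) \le \left(\gamma/3 + 7/18\right) n + \bigo{\log n}$, a quantity that is increasing in $\gamma$.

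In the nearly balanced regime $p > \gamma$, I would apply one level of the divide-and-conquer of \cref{alg:divconqgeneric} together with an \emph{adaptively} chosen tripartite partition. The key point is that a string in which both characters occur at least $\gamma n$ times can be cut into a prefix $B_0$, a short middle segment $B_1$, and a suffix $B_2$ such that $B_0$ and $B_2$ are each genuinely \emph{unbalanced} — their own minority fractions fall back below $\gamma$, so that the inductive hypothesis (equivalently, \cref{TBSboundlemma} applied to the smaller strings) controls $T(B_0)$ and $T(B_2)$ — while the single TBS-style merging reversal, of length $n_1(B_0) + \length{B_1} + n_0(B_2)$, is simultaneously bounded in terms of $\gamma$ and the three part lengths. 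Writing the part lengths as fractions of $n$, substituting these estimates into $T(b) \le \max_i T(B_i) + \tfrac13\bigl(n_1(B_0) + \length{B_1} + n_0(B_2)\bigr) + \bigo{\log n}$, and optimizing over the part lengths yields, for the nearly balanced case, a bound on $T(b)/n$ that \emph{decreases} in $\gamma$: insisting on a more balanced input forces the two unbalanced outer parts to absorb more of the string, which improves the recursion.

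Finally I would choose $\gamma$ so that the two case bounds coincide; solving this balancing condition for $\gamma$ gives $\gamma = 3/\sqrt{10} - 2/3 \in [0,1/3]$, for which $\gamma/3 + 7/18 = 1/6 + 1/\sqrt{10} = 1/2 - \varepsilon$ with $\varepsilon = 1/3 - 1/\sqrt{10}$, and hence $T(b) \le (1/2 - \varepsilon)n + \bigo{\log n} \approx 0.483\,n + \bigo{\log n}$ for all $b$. The step I expect to be the main obstacle is the nearly balanced case: one must verify that the adaptive partition with both required properties (outer parts genuinely unbalanced, merging reversal short enough) always exists, and then carry out the optimization over the three part lengths. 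This is precisely where the trade-off between making the subproblems shrink and keeping the merging reversal cheap is delicate, and it is where the exact value of $\varepsilon$ — and in particular the appearance of $\sqrt{10}$ — gets pinned down.
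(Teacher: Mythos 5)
There is a genuine gap at exactly the step you flag as the main obstacle: in the nearly balanced regime, the decomposition you rely on does not exist in general. Take $b = (01)^{n/2}$. Every contiguous prefix and suffix of more than constant length has minority fraction essentially $1/2$, so no cut into a prefix $B_0$, short middle $B_1$, and suffix $B_2$ can make $B_0$ and $B_2$ unbalanced (minority fraction below any $\gamma < 1/2$). Thus your Case 2 cannot get started on such strings, even though they are precisely the ``nearly balanced'' inputs it is meant to handle. The underlying issue is that the global balance of $b$ is not the quantity that controls the cost: the merging reversal has length $n_1(B_0) + \length{B_1} + n_0(B_2)$, so what matters is how many zeros sit in the left part and how many ones sit in the right part, not the overall minority fraction. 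The alternating string is globally balanced yet has a short merge (about $2n/3$ with equal thirds), and a plain inductive step on equal thirds already beats $(1/2-\varepsilon)n$ for it; conversely, $1^{n/2}0^{n/2}$ is also globally balanced but forces a merge of length about $n$, so the savings there must come from somewhere other than a short merge.

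The paper's proof splits on precisely the quantity that governs the merge: with (roughly) equal thirds $b_1 b_2 b_3$, either $n_0(b_1) \ge 2\varepsilon n$ or $n_1(b_3) \ge 2\varepsilon n$, in which case the merging reversal has length at most $(1-2\varepsilon)n$ and the induction hypothesis on the three parts closes the bound; or else $b_1$ has fewer than $2\varepsilon n$ zeros and $b_3$ fewer than $2\varepsilon n$ ones, in which case the outer parts are automatically unbalanced in the specific directions needed, the partition is widened so the outer parts absorb some of the middle, \cref{TBSboundlemma} (with $c = 2/15$) bounds the enlarged outer parts, the induction hypothesis bounds the shrunken middle, and the merging reversal is bounded only trivially by $n$, costing $(n+1)/3$. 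In other words, in the long-merge case the savings come entirely from the fast sorting of the sides and the shortened middle, not from a short merge, contrary to your requirement that both properties (unbalanced outer parts and a short merge) hold simultaneously. Your unbalanced regime (applying \cref{TBSboundlemma} to all of $b$ with $c=\gamma$) is fine, and your balancing condition $\gamma/3 + 7/18 = 1/2-\varepsilon$ does reproduce the right constant, but the balanced regime needs a dichotomy keyed to the merge length (as in the paper) rather than the existence claim you make.
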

\begin{proof}
    Let $b \in \{0, 1\}^n$ for some $n \in \N$.
    Partition $b$ into three sections $b = b_{1}b_{2}b_{3}$ such that $\length{b_{1}} = \length{b_{3}} = \floor{n/3}$ and $\length{b_{2}} = n - 2\floor{n/3}$.
    Since $\floor{n/3} = n/3 - d$ where $d \in \{0,1/3,2/3\}$, we write $\length{b_{1}} = \length{b_{2}} = \length{b_{3}} = n/3 + O(1)$ for the purposes of this proof.
    Recall that if segments $b_1$ and $b_3$ are sorted forwards and segment $b_2$ is sorted backwards, the resulting segment can be sorted using a single reversal, $\rho$ (see the example in \cref{eq:binarySortingExample}).
    Then we have
    \begin{equation}\label{TBSgenericbound}
        T(b) \leq \max(T(b_1), T'(b_2), T(b_3)) + \frac{\length{\rho} + 1}{3},
    \end{equation}
    where $T'(b_2)$ is the time to sort $b_2$ backwards using reversals. 
    
    We proceed by induction on $n$.
    For the base case, it suffices to note that every binary string can be sorted using reversals and, for finitely many values of $n \in \N$, any time needed to sort a binary string of length $n$ exceeding $\left(1/2 - \varepsilon \right) n$ can be absorbed into the $\bigo{\log n}$ term.
    Now assume $T(b) \leq \left(1/2 - \varepsilon\right)k + \bigo{\log k}$ for all $k < n$, $b \in \{0,1\}^k$.
    
    \textbf{Case 1: $n_0(b_1) \geq 2 \varepsilon n$ or $n_1(b_3) \geq 2 \varepsilon n$.}
    In this case, $\length{\rho} \leq n - 2\varepsilon n$, so
    \begin{equation}
        T(b) \leq \frac{n - 2\varepsilon n + 1}{3} + \max(T(b_1), T'(b_2), T(b_3)) \leq \left(\frac{1}{2} - \varepsilon\right)n + \bigo{\log n}
    \end{equation}
    by the induction hypothesis.
    
    \begin{figure}
        \centering
        \tikzset{every picture/.style={line width=0.75pt}} %set default line width to 0.75pt        

\begin{tikzpicture}[x=0.75pt,y=0.75pt,yscale=-1,xscale=1]
%uncomment if require: \path (0,124); %set diagram left start at 0, and has height of 124

%Shape: Rectangle [id:dp42983375176056526] 
\draw   (43,82) -- (241.67,82) -- (241.67,73) -- (43,73) -- cycle ;
%Shape: Rectangle [id:dp003631911449202274] 
\draw   (241.67,82) -- (440.33,82) -- (440.33,73) -- (241.67,73) -- cycle ;
%Shape: Rectangle [id:dp9919754824730549] 
\draw   (440.33,82) -- (639,82) -- (639,73) -- (440.33,73) -- cycle ;
%Shape: Rectangle [id:dp5017203070084941] 
\draw  [color={rgb, 255:red, 255; green, 0; blue, 0 }  ,draw opacity=1 ] (395,73) -- (396,73) -- (396,80) -- (395,80) -- cycle ;
%Shape: Rectangle [id:dp09825623350237622] 
\draw  [color={rgb, 255:red, 255; green, 0; blue, 0 }  ,draw opacity=1 ] (286,74) -- (287,74) -- (287,81) -- (286,81) -- cycle ;
%Curve Lines [id:da6225946554998514] 
\draw [color={rgb, 255:red, 255; green, 0; blue, 0 }  ,draw opacity=1 ]   (244,60) .. controls (262.1,45.52) and (276.92,55.61) .. (284.02,62.85) ;
\draw [shift={(286,65)}, rotate = 229.4] [fill={rgb, 255:red, 255; green, 0; blue, 0 }  ,fill opacity=1 ][line width=0.08]  [draw opacity=0] (8.93,-4.29) -- (0,0) -- (8.93,4.29) -- cycle    ;
%Curve Lines [id:da5358776908129448] 
\draw [color={rgb, 255:red, 255; green, 0; blue, 0 }  ,draw opacity=1 ]   (401.68,58.02) .. controls (420.42,45.25) and (435.27,58.32) .. (441,65) ;
\draw [shift={(399,60)}, rotate = 321.34] [fill={rgb, 255:red, 255; green, 0; blue, 0 }  ,fill opacity=1 ][line width=0.08]  [draw opacity=0] (8.93,-4.29) -- (0,0) -- (8.93,4.29) -- cycle    ;

% Text Node
\draw (225,87) node [anchor=north west][inner sep=0.75pt]  [font=\footnotesize] [align=left] {$\displaystyle n/3$};
% Text Node
\draw (423,85) node [anchor=north west][inner sep=0.75pt]  [font=\footnotesize] [align=left] {$\displaystyle 2n/3$};
% Text Node
\draw (244,22) node [anchor=north west][inner sep=0.75pt]  [font=\scriptsize] [align=left] {$\displaystyle \textcolor[rgb]{1,0,0}{\frac{2\varepsilon n}{3-6\varepsilon }}$};
% Text Node
\draw (409,21) node [anchor=north west][inner sep=0.75pt]  [font=\scriptsize] [align=left] {$\displaystyle \textcolor[rgb]{1,0,0}{\frac{2\varepsilon n}{3-6\varepsilon }}$};
% Text Node
\draw (87,94) node [anchor=north west][inner sep=0.75pt]  [font=\small] [align=left] {$\displaystyle n_{0}( b_{1}) \ < \ 2\varepsilon n$};
% Text Node
\draw (515,94) node [anchor=north west][inner sep=0.75pt]  [font=\small] [align=left] {$\displaystyle n_{1}( b_{3}) \ < \ 2\varepsilon n$};

\end{tikzpicture}
        \caption{Case 2 of \cref{bitstringbound}.  If there are few zeros and ones in the leftmost and rightmost thirds, respectively, we can shorten the middle section so that it can be sorted quickly.  Then, because each of the outer thirds contain far more zeros than ones (or vice versa), they can both can be sorted quickly as well.}
        \label{fig:adjusting-partition}
    \end{figure}
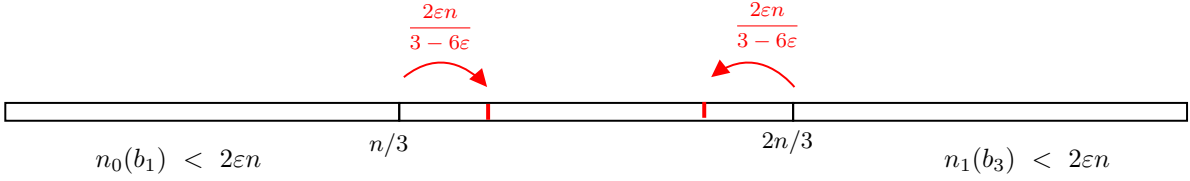
    
    \textbf{Case 2: $n_0(b_1) < 2\varepsilon n$ and $n_1(b_3) < 2 \varepsilon n$.}
    In this case, adjust the partition such that $\length{b_{1}} = \length{b_{3}} = n/3 + 2\varepsilon n/(3-6\varepsilon) - O(1)$ and consequently $\length{b_{2}} = n/3 - 4\varepsilon n / (3-6\varepsilon) + O(1)$, as depicted in \cref{fig:adjusting-partition}. 
    In this adjustment, at most $2\varepsilon n/(3-6\varepsilon)$ zeros are added to the segment $b_1$ and likewise with ones to $b_3$.
    Thus, $n_{1}(b_3) \leq 2\varepsilon n + 2\varepsilon n/(3-6\varepsilon) = \left( 1 + 1/(3-6\varepsilon)\right)2\varepsilon n$.
    Since $n = (3-6\varepsilon) \length{b_1} - O(1)$, we have
    \begin{equation}
        n_{1}(b_3) \leq \left( 1 + \frac{1}{3-6\varepsilon}\right)2\varepsilon ((3-6\varepsilon)\length{b_1} - O(1)) = (2-3\varepsilon)4\varepsilon \length{b_1} - O(1) .
    \end{equation}
    Let $c = (2-3\e)4\e = 2/15$. 
    Applying \cref{TBSboundlemma} with this value of $c$ yields
    \begin{align}
        T(b_{3}) &\leq \left( \frac{2}{45} + \frac{7}{18}\right)\length{b_1} + \bigo{\log \left(\length{b_1}\right)}
        = \left(\frac{1}{\sqrt{10}} - \frac{1}{6} \right) n + \bigo{\log n} .
    \end{align}
    Since $\length{b_1} = \length{b_3}$, we obtain the same bound $T(b_1) \leq (1/\sqrt{10} - 1/6)n + \bigo{\log n}$
    by applying \cref{TBSboundlemma} with the same value of $c$.
    
    By the inductive hypothesis, $T'(b_{2})$ can be bounded above by 
    \begin{align}
        T'(b_{2}) &\leq \left(\frac{1}{2} - \e\right)\left(\frac{n}{3} - \frac{4\varepsilon}{3 - 6 \varepsilon}n + O(1)\right) + \bigo{\log n}
        % &= \left(\frac{1}{6} - \e\right)n + \bigo{\log n}. \\
        = \left(\frac{1}{\sqrt{10}} - \frac{1}{6}\right)n + \bigo{\log n} .
    \end{align}
    Using \cref{TBSgenericbound} and the fact that $\length{\rho} \leq n$, we get the bound
    \begin{equation*}
        T(b) \leq \left(\frac{1}{\sqrt{10}} - \frac{1}{6} \right) n + \bigo{\log n} + \frac{n + 1}{3} = \left( \frac{1}{2} - \e \right)n + \bigo{\log n}
    \end{equation*}
    as claimed.
\end{proof}

This bound on the cost of a sorting series found by Adaptive TBS for binary sequences can easily be extended to a bound on the minimum sorting sequence for any permutation of length $n$.

\begin{corollary} \label{permutationbound}
    For a length-$n$ permutation $\pi$, $T(\pi) \leq \bigl(1/3 + \sqrt{2/5}\bigr) n + \bigo{\log^2 n} \approx 0.9658n + \bigo{\log^2 n}$.
\end{corollary}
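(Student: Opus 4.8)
The plan is to run \texttt{GenericDivideConquer} (\cref{alg:divconqgeneric}) with Adaptive TBS (\cref{alg:AdaptiveTBSexhaustive}) as the \texttt{BinarySorter}---that is, to analyze \Alg{ATBS}---and simply unroll the divide-and-conquer recurrence, feeding in \cref{bitstringbound} as the per-level cost bound. Recall from the discussion of \cref{alg:divconqgeneric} that implementing $\pi$ proceeds by labeling $\pi$ with the indicator $I$ of \cref{eq:indicator}, running \texttt{BinarySorter} on the resulting length-$n$ binary string to obtain a reversal sequence $\sigma$ with $I(\pi\sigma^{-1}) = (0^{\lfloor n/2\rfloor}1^{\lceil n/2\rceil})$, and then recursing independently on $\pi_L$ and $\pi_R$, which act on disjoint contiguous blocks of the path of length at most $\lceil n/2\rceil$. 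Since the two recursive calls act on disjoint segments, they run in parallel, so $T(\pi) \leq T(\sigma) + \max(T(\pi_L), T(\pi_R))$, and by \cref{bitstringbound}, $T(\sigma) \leq (1/2 - \varepsilon)n + \bigo{\log n}$ with $\varepsilon = 1/3 - 1/\sqrt{10}$.

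First I would unroll this recurrence. There are at most $\lceil \log n\rceil$ levels before every subproblem has length $1$ (where the only permutation is the identity, costing nothing). At level $i$ each subproblem has length at most $n/2^i + \bigo{1}$, so its binary-sorting step costs at most $(1/2-\varepsilon)(n/2^i) + \bigo{\log n}$ by \cref{bitstringbound}. Summing over all levels,
\begin{equation}
    T(\pi) \leq (1/2 - \varepsilon)\,n \sum_{i=0}^{\infty} 2^{-i} + \bigo{\log^2 n} = (1 - 2\varepsilon)\,n + \bigo{\log^2 n},
\end{equation}
where the $\bigo{\log^2 n}$ absorbs both the $\bigo{\log n}$ overhead at each of the $\bigo{\log n}$ levels and the $\bigo{1}$ corrections to the subproblem sizes. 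Substituting $\varepsilon = 1/3 - 1/\sqrt{10}$ and using $2/\sqrt{10} = \sqrt{2/5}$ gives $1 - 2\varepsilon = 1/3 + 2/\sqrt{10} = 1/3 + \sqrt{2/5} \approx 0.9658$, which is the claimed bound.

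The only real subtlety---rather than a genuine obstacle---is the bookkeeping: one must check that the additive $\bigo{1}$ discrepancies in subproblem lengths (from the floors and ceilings in both the indicator split and the recursive split) together with the $\bigo{\log n}$ terms of \cref{bitstringbound} at each of the $\bigo{\log n}$ levels accumulate to no more than $\bigo{\log^2 n}$; this holds because the geometric decay of the main term dominates while the number of levels is only logarithmic. One should also note that \cref{bitstringbound} applies verbatim at every level (it is stated for binary strings of arbitrary length) and that the length-$1$ base case contributes zero, so no separate base-case estimate is needed beyond what is already built into the $\bigo{\cdot}$ terms.
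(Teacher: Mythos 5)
Your proposal is correct and follows essentially the same route as the paper: convert $\pi$ to a binary string via \cref{eq:indicator}, apply \cref{bitstringbound} to bound the sorting cost at each level of the divide-and-conquer, and sum the geometric series $(1/2-\varepsilon)\sum_i n/2^i = (1-2\varepsilon)n = \bigl(1/3+\sqrt{2/5}\bigr)n$ while the $\bigo{\log n}$ per-level overheads over $\bigo{\log n}$ levels give the $\bigo{\log^2 n}$ term. The paper phrases this as $T(\pi)\leq\sum_{i=0}^{\log n}T(b_i)$ with $b_i\in\{0,1\}^{n/2^i}$ rather than as an explicit recurrence, but the argument is the same.
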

\begin{proof}
    To sort $\pi$, we turn it into a binary string $b$ using \cref{eq:indicator}.
    Then let $\rho_1,\rho_2, \dots, \rho_m$ be a sequence of reversals to sort $b$.
    If we apply the sequence to get $\pi'=\pi\rho_1\rho_2\cdots\rho_m$, every element of $\pi'$ will be on the same half as its destination.
    We can then recursively perform the same procedure on each half of $\pi'$, continuing down until every pair of elements has been sorted.
    
    This process requires $\lfloor\log n \rfloor$ steps, and at step $i$, there are $2^i$ binary strings of length $\frac{n}{2^i}$ being sorted in parallel.
    This gives us the following bound to implement $\pi$:
    \begin{equation}
        T(\pi) \leq \sum_{i=0}^{\log n}T(b_i) ,
    \end{equation}
    where $b_i \in \{0,1\}^{n/2^i}$.
    Applying the bound from \cref{bitstringbound}, we obtain
    \begin{equation*}
        T(\pi) \leq \sum_{i=0}^{\log n}T(b_i) \leq \sum_{i=0}^{\log n} \left(\left( \frac{1}{6} + \frac{1}{\sqrt{10}} \right)\frac{n}{2^i} + \bigo{\log (n/2^i)}\right) = \left(\frac{1}{3} + \sqrt{\frac{2}{5}} \right) n + \bigo{\log^2 n} . \qedhere
    \end{equation*}
\end{proof}

%%%%%%%%%%%%%%%%%%%%%%%%%%%%%%%%%%%%%%%%%%%%%%%%%%%%%%%%%%%%%%%%%%%%%%%%%%%%%%%%
\section{Average-case performance} \label{averagecase}

\begin{figure}
    \begin{subfigure}[t]{0.5\textwidth}
        \centering
        \includegraphics[width=\textwidth]{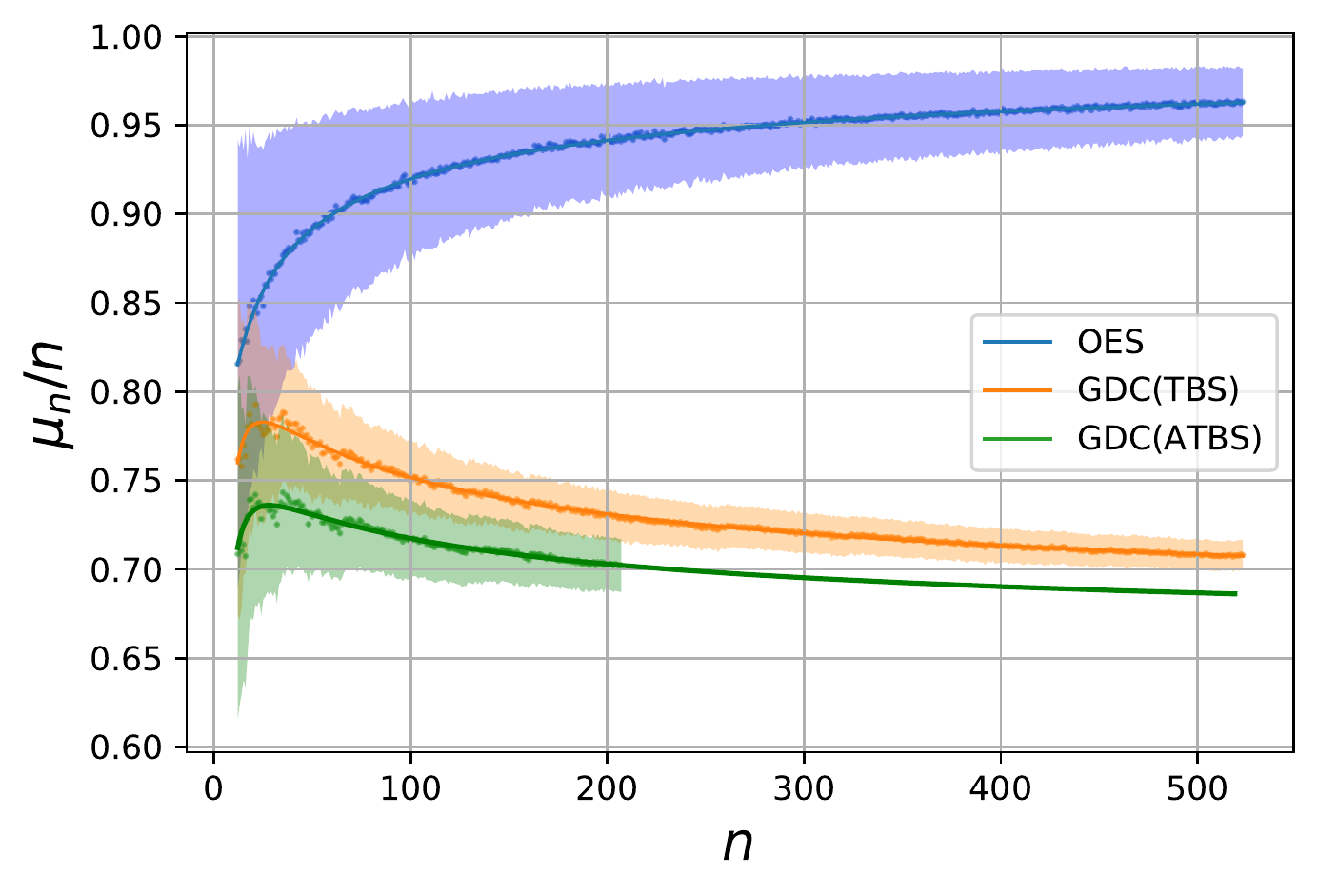}
        \caption{%
            Normalized mean routing time with std. deviation. 
        }
    \end{subfigure}
    \begin{subfigure}[t]{0.5\textwidth}
        \centering
        \includegraphics[width=\textwidth]{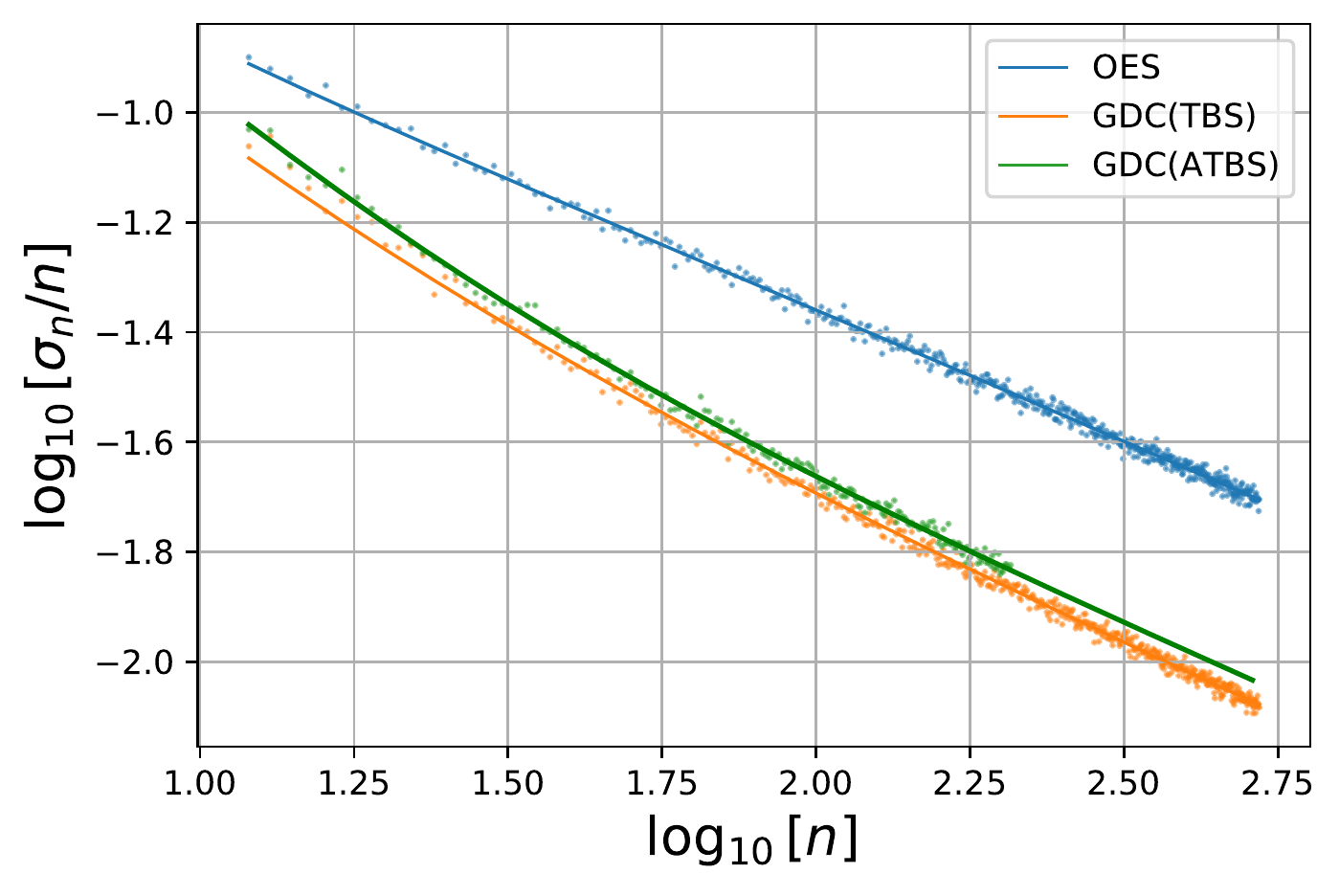}
        \caption{%
            Log normalized standard deviation of the routing time.
        }
    \end{subfigure}
    \caption{
    The mean routing time and fit of the mean routing time for odd-even sort (OES),
    and routing algorithms using Tripartite Binary Sort (\Alg{TBS}) and Adaptive TBS (\Alg{ATBS}).
    We exhaustively search for $n<12$ and sample 1000 permutations uniformly at random otherwise.
    We show data for \Alg{ATBS} only for $n \leq 207$ because it becomes too slow after that point. 
    We find that the fit function $\mu_n = an + b\sqrt{n} + c$ fits the data with an $R^2 > 99.99\%$ (all three algorithms).
    For OES, the fit gives $a \approx 0.9999$; for \Alg{TBS}, $a \approx 0.6599$; and for \Alg{ATBS}, $a \approx 0.6513$.
    Similarly, for the standard deviation, we find that the fit function $\sigma^2_n = an + b\sqrt{n} + c$ fits the data with $R^2 \approx 99\%$ (all three algorithms), suggesting that the normalized deviation of the performance about mean scales as $\sigma_n/n = \Theta(n^{-0.5})$ asymptotically.}
    \label{fig:asymp-mean}
\end{figure}

So far we have presented worst-case bounds that provide a theoretical guarantee on the speedup of quantum routing over classical routing. However, the bounds are not known to be tight, and may not accurately capture the performance of the algorithm in practice. 

In this section we show better performance for the 
\emph{average-case} routing time, the expected routing time of the algorithm on a permutation chosen uniformly at random from $\mathcal{S}_n$.
We present both theoretical and numerical results on the average routing time of swap-based routing (such as odd-even sort) and quantum routing using TBS and ATBS.
We show that on average, \Alg{TBS} (and \Alg{ATBS}, whose sorting time on any instance is at least as fast) beats swap-based routing by a constant factor $2/3$. We have the following two theorems, whose proofs can be found in \cref{OES-appendix,TBS-appendix}, respectively.

\begin{theorem}
\label{avg-OES}
The average routing time of any \swap-based procedure is lower bounded by $n-o(n)$.
\end{theorem}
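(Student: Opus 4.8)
The plan is to reduce \cref{avg-OES} to a standard estimate on the maximum displacement of a uniformly random permutation. The first step is to note that a swap-based protocol on the path cannot transport any qubit faster than one position per unit of time: the swaps applied within a single time step act on a set of disjoint edges (a matching), so each qubit participates in at most one swap and its position changes by at most $1$. Iterating over $T$ time steps, a qubit initially at position $i$ can reach position $\pi_i$ only if $T \ge |\pi_i - i|$. Hence every swap-based procedure needs routing time at least $D(\pi) := \max_{i \in [n]} |\pi_i - i|$ on input $\pi$, and it suffices to prove $\EX_{\pi \sim \mathcal S_n}[D(\pi)] \ge n - o(n)$.

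For the probabilistic estimate, I would discard all but the first $m = \lceil \sqrt n\, \rceil$ positions and use
\begin{equation*}
  D(\pi) \;\ge\; \max_{1 \le j \le m} (\pi_j - j) \;\ge\; \Bigl(\max_{1 \le j \le m} \pi_j\Bigr) - m .
\end{equation*}
Since $\{\pi_1, \dots, \pi_m\}$ is a uniformly random $m$-subset of $[n]$, its maximum has expectation $\tfrac{m(n+1)}{m+1}$, so $\EX[D(\pi)] \ge \tfrac{m(n+1)}{m+1} - m = \tfrac{m(n-m)}{m+1}$, which with $m = \lceil \sqrt n\, \rceil$ equals $n - \Theta(\sqrt n)$. (If one prefers to avoid the exact expected-maximum formula, the same conclusion follows from a concentration argument: for any fixed $\epsilon \in (0,1)$, $\Pr[\max_{j \le m}\pi_j \le (1-\epsilon)n] = \binom{\lfloor (1-\epsilon)n\rfloor}{m}\big/\binom{n}{m} \le (1-\epsilon)^m \to 0$, so $D(\pi) \ge (1-\epsilon)n - \sqrt n$ with probability tending to $1$; letting $\epsilon \to 0$ then gives $\EX[D(\pi)] = n - o(n)$.)

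The content of the argument is almost entirely in the reduction; there is no serious difficulty. The only point demanding care is the first step---confirming, within the paper's cost model where a swap takes unit time and operations may be scheduled in parallel, that one unit of time buys at most one position of transport per qubit, so that $D(\pi)$ lower bounds the routing time (circuit depth) and not merely the total number of swaps. This holds because the swaps of a single time step form a matching and hence move each qubit by at most one position, and these displacements accumulate additively over the $T$ time steps.
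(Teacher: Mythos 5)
Your proposal is correct, and its second half takes a genuinely different route from the paper. The reduction step is the same in both: the paper also lower bounds the \swap{}-based routing time by the infinity distance $d_{\infty}(\pi) = \max_i |\pi_i - i|$, justified exactly as you do (each time step is a matching of \swap{}s, so every qubit moves at most one position per unit time). Where you diverge is in bounding $\EX[d_{\infty}(\pi)]$ for uniform $\pi$. The paper invokes upper bounds of Schwartz--Vontobel (and Kl{\o}ve et al.) on the size $|B_{k,n}|$ of balls in the infinity metric on $\mathcal{S}_n$, combines them with Stirling estimates, and shows that the fraction of permutations with $d_{\infty}(\pi) \le (1-\e)n$ vanishes for every fixed $\e \in (0,1/2]$, then lets $\e \to 0$. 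You instead restrict attention to the first $m = \lceil \sqrt{n}\,\rceil$ positions and use the elementary order-statistics fact that $\max_{j\le m}\pi_j$, the maximum of a uniform $m$-subset of $[n]$, has expectation $m(n+1)/(m+1)$, giving $\EX[d_{\infty}(\pi)] \ge m(n-m)/(m+1) = n - \Theta(\sqrt{n})$ directly; your computation checks out (including the alternative tail bound $\binom{\lfloor(1-\e)n\rfloor}{m}/\binom{n}{m} \le (1-\e)^m$). Your argument is more self-contained, avoids the cited ball-size bounds and the slightly delicate ``for all fixed $\e$'' limiting step, and yields an explicit quantitative error term $n - O(\sqrt{n})$; the paper's approach buys a stronger counting/concentration statement (the proportion of permutations with $d_\infty \le (1-\e)n$ tends to zero, with an explicit exponential rate), which is more than is needed for the expectation bound but of independent interest.
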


\begin{theorem}\label{avg-TBS}
The average routing time of \Alg{TBS} is $2n/3 + O(n^{\alpha})$ for a constant $\alpha\in \bigl(\frac{1}{2},1\bigr)$.
\end{theorem}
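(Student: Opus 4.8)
\emph{Approach.} The plan is to exploit the recursive structure of $\Alg{TBS}$ (Algorithm~\ref{alg:divconqgeneric} run with \texttt{BinarySorter} set to TBS) and the observation that a uniformly random $\pi \in \mathcal{S}_n$ feeds a uniformly random \emph{balanced} binary string into TBS at every node of the divide-and-conquer recursion. Write $t(b)$ for the time TBS takes to sort a bitstring $b$; since the two subproblems of Algorithm~\ref{alg:divconqgeneric} run in parallel on disjoint sections of the path, the total routing time is
\[
  T(\pi) \;=\; \max_{P}\ \sum_{v \in P} t(b_v),
\]
where $P$ ranges over root-to-leaf paths of the binary recursion tree and $b_v$ is the bitstring produced at node $v$, with $|b_v| = n/2^{\mathrm{level}(v)} + O(1)$. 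The first step is to check that these $b_v$ really are uniform balanced strings: if $\pi$ is uniform on permutations of a contiguous block, then $I(\pi)$, and hence the reversal sequence $\bar\rho$ returned by TBS, depends only on the set $L$ of positions carrying a left-half label; conditioned on $L$, the restriction $\pi|_L$ is a uniform bijection onto the left labels, and composing it with the fixed bijection that $\bar\rho$ induces from the left half of positions onto $L$ keeps $\pi_L$ uniform (and similarly $\pi_R$, independently). Iterating down the tree gives the claim.

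\emph{Expected TBS time on a balanced string.} Let $b$ be a uniform balanced bitstring of length $m$. TBS splits $b$ into near-thirds $b^{(1)}b^{(2)}b^{(3)}$, recursively sorts $b^{(1)},b^{(3)}$ forward and $b^{(2)}$ backward, and merges with one reversal of length $n_1(b^{(1)}) + |b^{(2)}| + n_0(b^{(3)})$. Each third is itself (up to an $O(1)$ shift in its character counts) a uniform balanced string of length $m/3$, so the recursion is self-similar and in expectation the merge reversal has length $2m/3 + O(\sqrt m)$, giving
\[
  \EX[t(b)] \;\le\; \frac{2m/3 + O(\sqrt m) + 1}{3} \;+\; \EX\!\left[\max\bigl(t_1,\,t_2',\,t_3\bigr)\right],
\]
with $t_1,t_3$ distributed as the forward and $t_2'$ as the backward sorting time on length $m/3$. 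Unrolling over the $\log_3 m$ levels of TBS and bounding the maximum of three nearly-identically distributed times by its mean plus an $O(\sqrt{\,\cdot\,})$ correction (falling back, on the rare levels where concentration fails, on the crude worst-case bound of Lemma~\ref{TBSboundlemma}) yields $\EX[t(b)] \le m/3 + O(m^{\beta})$ for some $\beta \in (1/2,1)$; the leading term is $\sum_{j\ge 0}\frac{2}{9}\cdot\frac{m}{3^{j}} = \frac{m}{3}$.

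\emph{Summing over levels and maximizing over paths.} A root-to-leaf path $P$ visits one node per level, so $\EX\bigl[\sum_{v\in P} t(b_v)\bigr] \le \sum_{i=0}^{\lfloor\log n\rfloor}\bigl((n/2^{i})/3 + O((n/2^{i})^{\beta})\bigr) = 2n/3 + O(n^{\beta})$. To pass from this to $\EX[\max_P(\cdot)]$, combine it with concentration of each path-sum about its mean: the counts $n_0(\cdot),n_1(\cdot)$ on each subtree come from a uniform balanced arrangement and hence satisfy Hoeffding/Azuma-type sub-Gaussian tails, while $t(b_v)$ is a fixed function of $b_v$, so a bounded-differences argument gives deviations $O(n^{\alpha})$ with probability $1-n^{-\omega(1)}$; a union bound over the $\le n$ leaves then gives $\EX[T(\pi)] = 2n/3 + O(n^{\alpha})$, with the union-bound $\mathrm{polylog}\,n$ factors absorbed into an exponent $\alpha \in (1/2,1)$.

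\emph{Main obstacle.} The leading $2n/3$ is easy; the difficulty is the error bookkeeping. The merge-reversal lengths fluctuate on the $\sqrt{\,\cdot\,}$ scale, and these fluctuations must be carried through (i) the $\log_3 m$ nested ``max of three'' operations inside each TBS call, (ii) the ``max over $\sim n$ root-to-leaf paths'' of the outer recursion, and (iii) the slow drift of the $0/1$ imbalance as a balanced string is repeatedly cut into thirds. Controlling all three simultaneously with a single union bound over the $\Theta(n)$ subproblems is what prevents recovering the clean $\sqrt n$ correction seen numerically in Figure~\ref{fig:asymp-mean} and instead forces an explicit exponent $\alpha$ strictly between $1/2$ and $1$.
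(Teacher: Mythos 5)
Your proposal is correct in outline and shares the paper's skeleton---reduce a uniform permutation to uniform balanced bitstrings at every node of the outer recursion via the same bijection/``TBS only sees $b$'' argument, show the expected TBS time on a length-$m$ string is $m/3$ plus subleading corrections, and sum $\frac{1}{3}\cdot n/2^{i}$ over the outer levels to get $2n/3$---but the technical engine you use for the error control is genuinely different. The paper first unrolls the TBS recursion into a sum over levels of a \emph{single} maximum over all $3^{r-1}$ sibling merge-reversal lengths (\cref{eq:TBS-recursion}), then removes the balanced-weight constraint by an explicit coupling to i.i.d.\ uniform bits (\cref{lem:relaxation}), so that the only probabilistic input needed is the expected maximum of independent binomials (\cref{lem:expect-max}); no concentration of the recursive runtimes themselves is ever required. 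You instead keep the nested ``max of three'' structure and the balanced-string conditioning, and control both the inner maxima and the outer $\max_P\sum_{v\in P}t(b_v)$ by concentration of $t(b_v)$ about its mean (Azuma/bounded differences over the permutation or the exchangeable bits) plus union bounds over the $O(n)$ subproblems. That route works---a single transposition changes each merge length by $O(1)$ per TBS level, so the runtime has $\widetilde O(\sqrt m)$ fluctuations---but it demands a concentration lemma for $t(\cdot)$ under sampling without replacement that you would have to prove, whereas the paper's relaxation-plus-order-statistics route avoids it; conversely, your explicit union bound over root-to-leaf paths treats the outer $\max(T(\pi_L),T(\pi_R))$ more carefully than the paper's final display, which simply sums one expected runtime per level. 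Two small corrections: the thirds of a uniform balanced string are balanced only up to hypergeometric fluctuations of order $\sqrt m$, not ``an $O(1)$ shift'' (your later bookkeeping uses the right $\sqrt{\,\cdot\,}$ scale, so nothing breaks, but the parenthetical is wrong as stated); and on the rare non-concentration events the fallback should be the trivial $t(b)\le |b|$ odd-even-sort bound (as in \cref{lem:relaxation}) rather than \cref{TBSboundlemma}, whose hypothesis $n_x(b)\le cn+O(\log n)$ with $c\le 1/3$ need not hold there.
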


These theorems provide average-case guarantees, yet do not give information about the non-asymptotic behavior. Therefore, we test our algorithms on random permutations for instances of intermediate size.

Our numerics~\cite{reversal-sort-code} show that \cref{alg:divconqgeneric} has an average routing time that is well-approximated by $c\cdot n + o(n)$, where $2/3 \lesssim c < 1$, using TBS or Adaptive TBS as the binary sorting subroutine, for permutations generated uniformly at random. 
Similarly, the performance of odd-even sort (OES) is well-approximated by $n + o(n)$. 
%We are interested in the relationship between the average routing time of the returned reversal sequences and the permutation length. 
Furthermore, the advantage of quantum routing is evident even for fairly short paths.
We demonstrate this by sampling 1000 permutations uniformly from $\mathcal{S}_{n}$ for $n \in [12,512]$, and running OES and \Alg{TBS} on each permutation.
Due to computational constraints, \Alg{ATBS} was run on sample permutations for lengths $n \in [12, 206]$.
On an Intel i7-6700HQ processor with a clock speed of 2.60 GHz, OES took about 0.04 seconds to implement each permutation of length 512; \Alg{TBS} took about 0.3 seconds;
and, for permutations of length 200, \Alg{ATBS} took about 6 seconds.

The results of our experiments are summarized in \cref{fig:asymp-mean}.
We find that the mean normalized time costs for OES, \Alg{TBS}, and \Alg{ATBS} are similar for small $n$, but the latter two decrease steadily as the lengths of the permutations increase while the former steadily increases.
Furthermore, the average costs for \Alg{TBS} and \Alg{ATBS} diverge from that of OES rather quickly, suggesting that \Alg{TBS} and \Alg{ATBS} perform better on average for somewhat small permutations ($n \approx 50$) as well as asymptotically.

The linear coefficient $a$ of the fit of $\mu_n$ for OES is $a \approx 0.9999 \approx 1$, which is consistent with the asymptotic bound proven in \cref{avg-OES,avg-TBS}.
For the fit of the mean time costs for \Alg{TBS} and \Alg{ATBS}, we have $a \approx 0.6599$ and $a \approx 0.6513$ respectively.
The numerics suggest that the algorithm routing times agree with our analytics, and are fast for instances of realistic size. For example, at $n=100$, \Alg{TBS} and \Alg{ATBS} have routing times of $\sim 0.75 n$ and $0.72 n$, respectively. On the other hand, OES routes in average time $> 0.9n$. For larger instances, the speedup approaches the full factor of $2/3$ monotonically. 
Moreover, the fits of the standard deviations suggest $\sigma_n/n = \Theta(1/\sqrt{n})$ asymptotically, which implies that as permutation length increases, the distribution of routing times gets relatively tighter for all three algorithms. This suggests that the average-case routing time may indeed be representative of typical performance for our algorithms for permutations selected uniformly at random.

%%%%%%%%%%%%%%%%%%%%%%%%%%%%%%%%%%%%%%%%%%%%%%%%%%%%%%%%%%%%%%%%%%%%%%%%%%%%%%%%
\section{Conclusion}\label{conclusion}

We have shown that our algorithm, \Alg{ATBS} (i.e., Generic Divide-and-Conquer with Adaptive TBS to sort binary strings), uses the fast state reversal primitive to outperform any \swap{}-based protocol when routing on the path in the worst and average case.
Recent work shows a lower bound on the time to perform a reversal on the path graph of $n/\alpha$, where $\alpha \approx 4.5$~\cite{spin-chain}.
Thus we know that the routing time cannot be improved by more than a factor $\alpha$ over \swap{}s,
even with new techniques for implementing reversals.
However, it remains to understand the fastest possible routing time on the path.
Clearly, this is also lower bounded by $n/\alpha$.
Our work could be improved by addressing the following two open questions:
\begin{ienumerate} \item how fast can state reversal be implemented, and \item what is the fastest way of implementing a general permutation using state reversal? \end{ienumerate}

We believe that the upper bound in \cref{permutationbound} can likely be decreased.
For example, in the proof of \cref{TBSboundlemma}, we use a simple bound to show that the reversal sequence found by \Alg{TBS} sorts binary strings with fewer than $cn$ ones sufficiently fast for our purposes.
It is possible that this bound can be decreased if we consider the reversal sequence found by \Alg{ATBS} instead.
Additionally, in the proof of \cref{bitstringbound}, we only consider two pairs of partition points: one pair in each case of the proof.
This suggests that the bound in \cref{bitstringbound} might be decreased if the full power of \Alg{ATBS} could be analyzed.

Improving the algorithm itself is also a potential avenue to decrease the upper bound in \cref{permutationbound}. 
For example, the generic divide-and-conquer approach in \cref{alg:divconqgeneric} focused on splitting the path exactly in half and recursing.
An obvious improvement would be to create an adaptive version of \cref{alg:divconqgeneric} in a manner similar to \Alg{ATBS} where instead of splitting the path in half, the partition point would be placed in the optimal spot.
It is also possible that by going beyond the divide-and-conquer approach, we could find faster reversal sequences and reduce the upper bound even further.
% Indeed, our reversal-based algorithm also seems to obtain a significant constant factor speedup in the average case over odd-even sort, as evident by \cref{avg-TBS} and our numerical results (\cref{fig:asymp-mean}). 

Our algorithm uses reversals to show the first quantum speedup for unitary quantum routing.
It would be interesting to find other ways of implementing fast quantum routing that are not necessarily based on reversals.
Other primitives for rapidly routing quantum information might be combined with classical strategies to develop fast general-purpose routing algorithms, possibly with an asymptotic scaling advantage.
Such primitives might also take advantage of other resources, such as long-range Hamiltonians or the assistance of entanglement and fast classical communication.

\section*{Acknowledgements}
We thank William Gasarch for organizing the REU-CAAR program that made this project possible.

A.B.\ and A.V.G.\ acknowledge support by the DoE ASCR Quantum Testbed Pathfinder program (award number de-sc0019040), ARO MURI, DoE ASCR Accelerated Research in Quantum Computing program (award number de-sc0020312), U.S. Department of Energy award number de-sc0019449, NSF PFCQC program, AFOSR, and AFOSR MURI.
A.M.C.\ and E.S.\ acknowledge support by the U.S.\ Department of Energy, Office of Science, Office of Advanced Scientific Computing Research, Quantum Testbed Pathfinder program (award number de-sc0019040) and the U.S.\ Army Research Office (MURI award number W911NF-16-1-0349).
S.K.\ and H.S.\ acknowledge support from an NSF REU grant, REU-CAAR (CNS-1952352).
E.S.\ acknowledges support from an IBM Ph.D.\ Fellowship.

\printbibliography%

\appendix
%%%%%%%%%%%%%%%%%%%%%%%%%%%%%%%%%%%%%%%%%%%%%%%%%%%%%%%%%%%%%%%%%%%%%%%%%%%%%%%%
\section{Average routing time using only \swap{}s} \label{OES-appendix}

In this section, we prove \cref{avg-OES}.
First, define the infinity distance $d_{\infty}\colon \mathcal{S}_{n} \to \mathbb{N}$ to be $d_{\infty}(\pi) = \max_{1 \leq i \leq n}|\pi_{i} - i|$.
Note that $0 \leq d_{\infty}(\pi) \leq n-1$. 
Finally, define the set of permutations of length $n$ with infinity distance at most $k$ to be $B_{k,n} = \{\pi \in \mathcal{S}_{n} : d_{\infty}(\pi) \leq k\}$. 

The infinity distance is crucially tied to the performance of odd-even sort, and indeed, any \swap{}-based routing algorithm. 
For any permutation $\pi$ of length $n$, the routing time of any \swap{}-based algorithm is bounded below by $d_{\infty}(\pi)$, since the element furthest from its destination must be swapped at least $d_{\infty}(\pi)$ times, and each of those \swap{}s must occur sequentially. 
To show that the average routing time of any \swap{}-based protocol is asymptotically at least $n$, we first show that $|B_{(1-\e) n, n}|/n! \to 0$ for all $0 < \e \leq 1/2$.

Schwartz and Vontobel \cite{permutation-balls} present an upper bound on $|B_{k,n}|$ that was proved in \cite{ball-bound-rho-small} and \cite{ball-bound-rho-big}:

\begin{lemma}\label{ballsizeupperbound}
    For all $0 < r < 1$, $|B_{r n, n}| \leq \Phi(r n, n)$, where 
    \begin{align}
        \Phi(k, n) &=  
                \begin{cases*}
                    ((2k +1)!)^{\frac{n-2k}{2k+1}} \prod_{i = k+1}^{2k}(i!)^{2/i} & if ~ $0 < k/n \leq \frac{1}{2}$  \\
                     (n!)^{\frac{2k + 2 - n}{n}} \prod_{i = k+1}^{n-1}(i!)^{2/i} & if ~ $\frac{1}{2} \leq k/n < 1$.
                 \end{cases*} \label{ballsizeupperbound-expression}
    \end{align}    
\end{lemma}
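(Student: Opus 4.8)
The plan is to identify $|B_{k,n}|$ with the permanent of a banded $0$--$1$ Toeplitz matrix and then invoke the Bregman--Minc inequality (the Minc conjecture, proved by Bregman). Write $k=\lfloor rn\rfloor$ and let $A=A_{k,n}$ be the $n\times n$ matrix with $A_{ij}=1$ when $|i-j|\le k$ and $A_{ij}=0$ otherwise. A permutation $\pi$ lies in $B_{k,n}$ precisely when $A_{i,\pi_i}=1$ for every $i$, so
\begin{equation}
  |B_{k,n}| = \sum_{\pi\in\mathcal S_n}\prod_{i=1}^n A_{i,\pi_i} = \mathrm{perm}(A).
\end{equation}
The Bregman--Minc inequality states that any $n\times n$ matrix with entries in $\{0,1\}$ and row sums $r_1,\dots,r_n$ satisfies $\mathrm{perm}(A)\le\prod_{i=1}^n (r_i!)^{1/r_i}$. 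So it remains only to read off the row sums of $A_{k,n}$ and simplify the product.

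First I would tabulate the row sums. Row $i$ of $A_{k,n}$ has support $[\max(i-k,1),\min(i+k,n)]$, hence $r_i=\min(i+k,n)-\max(i-k,1)+1$. When $k\le n/2$: $r_i=i+k$ for $1\le i\le k$, $r_i=2k+1$ for $k+1\le i\le n-k$, and $r_i=n-i+k+1$ for $n-k+1\le i\le n$; so, counting multiplicities (which sum to $n$), the value $2k+1$ occurs $n-2k$ times and each of $k+1,k+2,\dots,2k$ occurs twice. The Bregman--Minc bound then gives $|B_{k,n}|\le \bigl((2k+1)!\bigr)^{(n-2k)/(2k+1)}\prod_{i=k+1}^{2k}(i!)^{2/i}$, which is exactly the first case of $\Phi(k,n)$. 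When $k\ge n/2$, the all-ones rows (row sum $n$) are those with $n-k\le i\le k+1$, of which there are $2k+2-n$, while the remaining rows have row sums taking each value in $\{k+1,\dots,n-1\}$ twice; the same bound gives $|B_{k,n}|\le (n!)^{(2k+2-n)/n}\prod_{i=k+1}^{n-1}(i!)^{2/i}$, the second case of $\Phi(k,n)$.

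The argument is essentially bookkeeping, so I do not expect a substantive obstacle; the one point that needs care is the transition at $k=n/2$, where the block of rows attaining the maximal row sum switches from rows $k+1,\dots,n-k$ (row sum $2k+1$, width $n-2k$) to rows $n-k,\dots,k+1$ (row sum $n$, width $2k+2-n$), and the two symmetric tails of decreasing row sums must be indexed with the correct endpoints ($k+1,\dots,2k$ versus $k+1,\dots,n-1$)---this is precisely what produces the two cases of the formula. One can check that the two expressions agree at $k=n/2$, so the bound is continuous across the split. For completeness I would also record the degenerate situations: if $rn\notin\mathbb N$ the statement is read with $k=\lfloor rn\rfloor$, and the extreme case $k=n-1$ has $A_{k,n}$ the all-ones matrix with $\mathrm{perm}(A_{k,n})=n!=\Phi(n-1,n)$, consistent with the inequality.
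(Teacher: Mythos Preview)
Your proof is correct and is precisely the standard argument: the paper itself does not give a proof but defers to the references \cite{ball-bound-rho-small,ball-bound-rho-big} (via \cite{permutation-balls}), and those references establish the bound exactly as you do, by writing $|B_{k,n}|$ as the permanent of the banded $0$--$1$ matrix $A_{k,n}$ and applying the Bregman--Minc inequality with the row sums you tabulate. Your bookkeeping of the row sums in the two regimes $k\le n/2$ and $k\ge n/2$ is accurate, and your check that the two expressions coincide at $k=n/2$ is a nice sanity check that the cited works also note.
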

\begin{proof}
    Note that $r = k/n$.
    For the case of $0 < r \leq 1/2$, refer to \cite{ball-bound-rho-small} for a proof. 
    For the case of $1/2 \leq r < 1$, refer to \cite{ball-bound-rho-big} for a proof.
\end{proof}

% \begin{lemma}\label{factorial-bound}
% For all $n = 1,2,3,...$, 
%     \begin{align}
%         \sqrt{2\pi n}\left(\frac{n}{e}\right)^{n} \leq n! \leq \sqrt{2\pi n}\left(\frac{n}{e}\right)^{n}e. 
%     \end{align}
% \end{lemma}
% \begin{proof}
% This result follows from Robbins' precise bounds for Stirling's formula \cite{robbins1955remark}: 
% \begin{align}
%     \sqrt{2\pi n}\left(\frac{n}{e}\right)^{n}e^{\frac{1}{12n+1}} &\leq n! \leq \sqrt{2\pi n}\left(\frac{n}{e}\right)^{n}e^{\frac{1}{12n}} \\
%     \sqrt{2\pi n}\left(\frac{n}{e}\right)^{n} &\leq n! \leq \sqrt{2\pi n}\left(\frac{n}{e}\right)^{n}e.
% \end{align}
% \end{proof}

\begin{lemma}\label{factorial-bound}
    \begin{align}
        n! = \Theta \left(\sqrt{n}\left(\frac{n}{e}\right)^{n}\right)
    \end{align}
\end{lemma}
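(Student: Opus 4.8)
The plan is to prove the sharper statement $\ln n! = n\ln n - n + \tfrac12\ln n + C + \bigo{1/n}$ for a constant $C$; exponentiating then gives $n! = e^{C}\bigl(1 + \bigo{1/n}\bigr)\sqrt{n}\,(n/e)^{n}$, which is exactly the claimed $\Theta\bigl(\sqrt{n}(n/e)^{n}\bigr)$ (the finitely many small values of $n$ being harmless for a $\Theta$-bound).

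First I would write $\ln n! = \sum_{k=1}^{n}\ln k$ and compare this sum to $\int_1^n \ln x\,dx = n\ln n - n + 1$ using the trapezoidal rule. Define the per-interval error $e_k := \int_k^{k+1}\ln x\,dx - \tfrac12\bigl(\ln k + \ln(k+1)\bigr)$. Summing over $k = 1,\dots,n-1$, the trapezoid terms telescope to $\ln n! - \tfrac12\ln n$ (because $\sum_{k=1}^{n-1}\ln k = \ln n! - \ln n$ while $\sum_{k=2}^{n}\ln k = \ln n!$), so $n\ln n - n + 1 = \ln n! - \tfrac12\ln n + \sum_{k=1}^{n-1}e_k$.

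Next I would bound $\lvert e_k\rvert$. Integration by parts twice (equivalently, the Peano-kernel representation of the trapezoid rule) gives $\lvert e_k\rvert \le \tfrac12\int_k^{k+1}(x-k)(k+1-x)\,\lvert(\ln x)''\rvert\,dx \le \tfrac{1}{2}\cdot\tfrac{1}{k^2}\int_0^1 t(1-t)\,dt = \tfrac{1}{12k^{2}}$, using $\lvert(\ln x)''\rvert = x^{-2}\le k^{-2}$ on $[k,k+1]$. Hence $\sum_{k\ge1}e_k$ converges absolutely and its tail satisfies $\sum_{k\ge n}e_k = \bigo{1/n}$. Rearranging the identity above yields $\ln n! = n\ln n - n + \tfrac12\ln n + C + \bigo{1/n}$ with $C := 1 - \sum_{k\ge1}e_k$, and exponentiating completes the proof. (One could instead invoke Stirling's formula directly, which additionally pins down $C = \tfrac12\ln 2\pi$; only the order of growth is needed here.)

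The argument is routine, and there is no real obstacle: the single point deserving care is the trapezoid error bound $\lvert e_k\rvert \le 1/(12k^2)$, which is exactly what upgrades the crude integral comparison (that only gives $n! = e^{\bigo{\log n}}(n/e)^n$, off by a factor of order $n$) to the tight $\sqrt{n}$ prefactor. A reader preferring to avoid integrals altogether can instead show that $a_n := \ln n! - (n + \tfrac12)\ln n + n$ is decreasing (via the elementary inequality $(n+\tfrac12)\ln(1 + 1/n) > 1$) and bounded below, hence convergent, which gives the same conclusion.
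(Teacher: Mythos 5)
Your proof is correct, but it takes a genuinely different route from the paper. The paper's proof is a one-line appeal to the sharp two-sided Stirling bounds $\sqrt{2\pi n}\,(n/e)^n e^{1/(12n+1)} \le n! \le \sqrt{2\pi n}\,(n/e)^n e^{1/(12n)}$ (citing Robbins), which immediately gives the $\Theta$ statement and, incidentally, the exact constant $\sqrt{2\pi}$ and a tight multiplicative error. You instead prove the estimate from scratch: comparing $\ln n! = \sum_{k\le n}\ln k$ with $\int_1^n \ln x\,dx$ via the trapezoid rule, and your key step---the Peano-kernel bound $|e_k|\le 1/(12k^2)$, giving an absolutely convergent error series with $\bigo{1/n}$ tail---is exactly what upgrades the crude integral comparison to the correct $\sqrt n$ prefactor. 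The telescoping identity, the definition of $C$, and the exponentiation step all check out, so the argument is complete and self-contained; what it does not deliver (and does not need to, since only the order of growth enters the paper's use of the lemma in the ball-counting estimate) is the value of the constant, which the citation-based proof gets for free. One small caveat on your closing aside: showing that $a_n = \ln n! - (n+\tfrac12)\ln n + n$ is decreasing via $(n+\tfrac12)\ln(1+1/n) > 1$ is fine, but boundedness below requires a companion inequality (e.g.\ an upper bound on $(n+\tfrac12)\ln(1+1/n) - 1$ by a telescoping series) that you would need to supply if you took that route; as stated it is only a sketch, whereas your main trapezoid argument is fully rigorous.
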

\begin{proof}
This follows from well-known precise bounds for Stirling's formula: 
\begin{align}
    \sqrt{2\pi n}\left(\frac{n}{e}\right)^{n}e^{\frac{1}{12n+1}} &\leq n! \leq \sqrt{2\pi n}\left(\frac{n}{e}\right)^{n}e^{\frac{1}{12n}} \\
    \sqrt{2\pi n}\left(\frac{n}{e}\right)^{n} &\leq n! \leq \sqrt{2\pi n}\left(\frac{n}{e}\right)^{n}e
\end{align}
(see for example \cite{robbins1955remark}).
\end{proof}

With \cref{ballsizeupperbound,factorial-bound} in hand, we proceed with the following theorem:

\begin{theorem}\label{spherevanishesproof}
    For all $0 < \e \leq 1/2$, $\lim_{n \to \infty}|B_{(1-\e)n, n}|/n! = 0$. 
    In other words, the proportion of permutations of length $n$ with infinity distance less than $(1-\epsilon)n$ vanishes asymptotically.
\end{theorem}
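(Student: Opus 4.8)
The plan is to use the explicit upper bound $\Phi((1-\e)n,n)$ from \cref{ballsizeupperbound} together with the asymptotic form of the factorial from \cref{factorial-bound}, and to show that the ratio $\Phi((1-\e)n,n)/n!$ tends to $0$ by taking logarithms. Since $(1-\e)/n = (1-\e)$ lies in $[1/2,1)$ whenever $0<\e\le1/2$, we are in the second branch of \cref{ballsizeupperbound-expression}, so with $k=(1-\e)n$ we have
\begin{equation*}
    |B_{(1-\e)n,n}| \le (n!)^{\frac{2(1-\e)n+2-n}{n}}\prod_{i=(1-\e)n+1}^{n-1}(i!)^{2/i}
    = (n!)^{1-2\e+o(1)}\prod_{i=(1-\e)n+1}^{n-1}(i!)^{2/i}.
\end{equation*}
First I would take $\log$ of both sides and of $n!$, reducing the claim to showing that
\begin{equation*}
    \log|B_{(1-\e)n,n}| - \log n! \;\longrightarrow\; -\infty,
\end{equation*}
or even just that the leading-order term in $n\log n$ is strictly negative.

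The key computation is to estimate $\sum_{i=(1-\e)n+1}^{n-1}\frac{2}{i}\log(i!)$. Using $\log(i!) = i\log i - i + O(\log i)$ from \cref{factorial-bound}, each summand is $\frac{2}{i}\log(i!) = 2\log i - 2 + O\!\bigl(\tfrac{\log i}{i}\bigr)$, so the sum is $\sum_{i=(1-\e)n+1}^{n-1}(2\log i - 2) + O(\log^2 n)$. Approximating this sum by the integral $\int_{(1-\e)n}^{n}(2\log x - 2)\,dx = 2\bigl[x\log x - x - x\bigr]_{(1-\e)n}^{n} = 2\bigl[x\log x - 2x\bigr]_{(1-\e)n}^{n}$, the dominant contribution is $2\bigl(n\log n - (1-\e)n\log((1-\e)n)\bigr) + O(n) = 2\e\, n\log n + O(n)$. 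Combining with the $(n!)^{1-2\e}$ factor, which contributes $(1-2\e)(n\log n - n) + O(\log n)$ by \cref{factorial-bound}, gives
\begin{equation*}
    \log|B_{(1-\e)n,n}| \le (1-2\e)\,n\log n + 2\e\,n\log n + O(n) = (1-\e)\cdot\text{wait}
\end{equation*}
— I should be careful here: the bookkeeping must be done so that the $n\log n$ coefficient of $\log|B_{(1-\e)n,n}|$ comes out strictly less than $1$, since $\log n! = n\log n + O(n)$. The precise accounting of the $i!$ product against the fractional power of $n!$ is the delicate part, and I expect the $n\log n$ coefficient to be $(1-2\e) + 2\e = 1$ at leading order, so that the $n\log n$ terms cancel and the sign of the difference is decided by the $\Theta(n)$ terms; there one checks that the constant multiplying $n$ is strictly negative for every $\e\in(0,1/2]$ (with a separate, easier check, or a direct appeal to the first branch, for $\e=1/2$).

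The main obstacle will therefore be doing this asymptotic bookkeeping carefully enough to see that the linear-in-$n$ term has a strictly negative coefficient after the $n\log n$ terms cancel — a Stirling/Euler–Maclaurin estimate of $\sum \frac{2}{i}\log(i!)$ accurate to $O(n)$, not just $O(n\log n)$. If the leading terms were not to cancel, one would instead be done immediately; in the (expected) borderline case one needs the next-order term, and I would extract it by writing $i = (1-\e)n + j$ and expanding, or by integrating $2\log x - 2$ exactly and controlling the Euler–Maclaurin correction. Once the bound $\log\bigl(|B_{(1-\e)n,n}|/n!\bigr) \le -\delta_\e n + O(\log^2 n)$ is established for some $\delta_\e>0$, exponentiating gives $|B_{(1-\e)n,n}|/n! \to 0$, completing the proof.
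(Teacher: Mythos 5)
Your proposal takes essentially the same route as the paper: apply the second branch of \cref{ballsizeupperbound} with $k=(1-\e)n$, compare against $n!$ via \cref{factorial-bound}, and show the ratio decays exponentially because the $n\log n$ terms cancel and the linear-in-$n$ term is strictly negative; the paper simply does this bookkeeping multiplicatively (bounding $\prod_i i^{1/i}\le n$ and collapsing $\prod_i i^2$ into a ratio of factorials) rather than in logarithms. The sign check you deferred does close routinely: carrying your integral estimate to $O(n)$ accuracy gives $\log\bigl(|B_{(1-\e)n,n}|/n!\bigr)\le -2n\bigl[\e+(1-\e)\ln(1-\e)\bigr]+O(\log^2 n)$, and $\e+(1-\e)\ln(1-\e)>0$ for every $\e\in(0,1/2]$, which is exactly the decay rate $\exp\bigl(-2n[(1-\e)\ln(1-\e)+\e]\bigr)$ obtained in the paper's proof, so no separate treatment of the $\e=1/2$ case is needed.
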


\begin{proof}
\cref{ballsizeupperbound} implies that $|B_{(1-\e)n, n}|/n! \le \Phi((1-\e)n, n)/n!$. The constraint $0 < \e \leq 1/2$ stipulates that we are in the regime where $1/2 \leq r < 1$, since $r = 1 - \e$. Then we use \cref{factorial-bound} to simplify any factorials that appear.
Substituting \cref{ballsizeupperbound-expression} and simplifying, we have 
\begin{align}
    \frac{\Phi\left((1-\e)n, n\right)}{n!} &= \frac{\prod_{i = (1-\e)n + 1}^{n-1}(i!)^{2/i}}{(n!)^{2\e - 2/n}} =
    O\left(\frac{e^{2\e n - 2}}{n^{2 \e n - 2}}\prod_{i = (1-\e)n + 1}^{n-1}\frac{i^{2+1/i}}{e^{2}}\right).
\end{align}
We note that $i^{1/i}$ terms can be bounded by
\begin{equation}
    \prod_{i=(1-\e)n+1}^{n-1} i^{\frac{1}{i}} \le \prod_{i=(1-\e)n+1}^{n-1} n^{\frac{1}{(1-\e)n}} \le n^{\frac{\e}{1-\e}} \le n
\end{equation}
since $\e \le 1/2$.
Now we have
\begin{align}
    O\left(\frac{e^{2\e n - 2}}{n^{2 \e n - 2}}\prod_{i = (1-\e)n + 1}^{n-1}\frac{i^{2+1/i}}{e^{2}}\right)
    &= O\left(\frac{n}{n^{2 \e n - 2}}\prod_{i = (1-\e)n + 1}^{n-1}i^{2}\right) \\
    &= O\left(\frac{n}{n^{2 \e n - 2}}\left(\frac{(n-1)!}{((1-\e)n +1)!}\right)^{2}\right) \\
    &= O\left(\frac{n}{n^{2 \e n - 2}e^{2\e n}}\frac{(n-1)^{2n-1}}{((1-\e)n+1)^{2(1-\e)n+2}}\right) \\
    &= O\left(\frac{n}{n^{2 \e n - 2}e^{2\e n}}\frac{n^{2n}}{((1-\e)n)^{2(1-\e)n}}\right) \\
    &= \bigo*{\frac{n^{3}}{\exp\left((\ln(1-\e)(1-\e) + \e)2n\right)}}.\label{eq:show-const-1}
\end{align}
Since $\ln(1-\e)(1-\e) + \e > 0$ for $\e >0$, this vanishes in the limit of large $n$.
\end{proof}

Now we prove the theorem.
\begin{proof}[Proof of~\cref{avg-OES}]  
    Let $\Bar{T}$ denote the average routing time of any \swap{}-based protocol.
    Consider a random permutation $\pi$ drawn uniformly from $\mathcal{S}_{n}$. 
    Due to \cref{spherevanishesproof}, $\pi$ will belong in $B_{(1-\e)n, n}$ with vanishing probability, for all  $0 < \e \leq 1/2$.  Therefore, for any fixed $0 < \e \leq 1/2$ as $n \to \infty$, $(1-\e)n < \mathbb{E}\left[d_{\infty}(\pi)\right]$. 
    This translates to an average routing time of at least $n - o(n)$ because we have, asymptotically, $(1-\e)n \leq \Bar{T}$ for all such $\e$.
\end{proof}

\newcommand{\tbs}[0]{T}

\section{Average routing time using TBS}\label{TBS-appendix}

In this section, we prove~\cref{avg-TBS}, which characterizes the average-case performance of TBS (\cref{alg:TBS}).
This approach consists of two steps: a recursive call on three equal partitions of the path (of length $n/3$ each), and a merge step involving a single reversal.

We denote the uniform distribution over a set $S$ as $\mathcal{U}(S)$. The set of all $n$-bit strings is denoted $\mathbb{B}^n$, where $\mathbb{B}=\{0,1\}$. Similarly, the set of all $n$-bit strings with Hamming weight $k$ is denoted $\mathbb{B}_k^n$. For simplicity, assume that $n$ is even. We denote the runtime of TBS on $b \in \mathbb{B}^n$ by $T(b)$.
% The runtime of the TBS algorithm on input permutation $\pi$ will be denoted $\tbs(\pi)$. The input to TBS is not a permutation but the corresponding 0-1 sequence of length $n$, which we denote $b(\pi)$. The $i$-th bit of this sequence is given by $b(\pi)_i = \pi(i)\pmod{n/2}$. We will sometimes write the runtime of TBS on the 0-1 string $b(\pi)$ as $\tbs(b)$, which is the same as $\tbs(\pi)$.

When running $\Alg{TBS}$ on a given permutation $\pi$, the input bit string for TBS is $b = I(\pi)$, where the indicator function $I$ is defined in \cref{eq:indicator}.
We wish to show that, in expectation over all permutations $\pi$, the corresponding bit strings are quick to sort.
First, we show that it suffices to consider uniformly random sequences from $\mathbb{B}_{n/2}^n$.

\begin{lemma}
  \label{lem:bitization}
If $\pi\sim \mathcal{U}(\mathcal{S}_n)$, then 
% the binary mapping is drawn according to 
$I(\pi)\sim\mathcal{U}(\mathbb{B}_{n/2}^n)$.
\end{lemma}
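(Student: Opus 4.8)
The plan is to show two things: first, that $I(\pi)$ always lands in $\mathbb{B}_{n/2}^n$, and second, that the induced distribution on $\mathbb{B}_{n/2}^n$ is uniform. The first point is immediate from the definition of the indicator function in \cref{eq:indicator}: exactly $\lfloor n/2 \rfloor$ of the labels $\{1,\dots,n\}$ satisfy $v < n/2$ and exactly $\lceil n/2\rceil$ satisfy $v \ge n/2$, so for \emph{any} permutation $\pi$ the string $I(\pi) = (I(\pi_1),\dots,I(\pi_n))$ has exactly $\lceil n/2 \rceil$ ones. With $n$ even (as assumed in this section) that is $n/2$ ones, hence $I(\pi)\in\mathbb{B}_{n/2}^n$ with probability $1$.

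For the second point, I would count preimages. Fix a target string $b\in\mathbb{B}_{n/2}^n$. A permutation $\pi\in\mathcal{S}_n$ satisfies $I(\pi)=b$ precisely when, for every position $i$, the destination $\pi_i$ is a ``low'' label (one of the $n/2$ labels $<n/2$) exactly when $b_i=0$, and a ``high'' label otherwise. Equivalently, $\pi$ must send the $n/2$ positions $i$ with $b_i=0$ bijectively onto the set $L$ of low labels, and the $n/2$ positions with $b_i=1$ bijectively onto the set $H$ of high labels. The number of such $\pi$ is therefore $(n/2)! \cdot (n/2)!$, independent of the choice of $b$. Since $\pi$ is uniform on $\mathcal{S}_n$, it follows that $\Pr[I(\pi)=b] = (n/2)!\,(n/2)!/n!$ for every $b\in\mathbb{B}_{n/2}^n$, which is the constant $1/\binom{n}{n/2} = 1/|\mathbb{B}_{n/2}^n|$. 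Hence $I(\pi)$ is uniform on $\mathbb{B}_{n/2}^n$.

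There is no real obstacle here — this is a counting argument — so I would keep it short, essentially the two paragraphs above written compactly, perhaps noting that the same computation works for general $n$ with $\lfloor n/2\rfloor$ and $\lceil n/2\rceil$ in place of $n/2$ if one wanted to drop the evenness assumption. The only thing to be slightly careful about is the boundary case $v = n/2$ itself: when $n$ is even this is the integer $n/2$, which the indicator sends to $1$ (the ``otherwise'' branch), so the low set is $\{1,\dots,n/2-1\}$ together with... wait — one must recount. With $I(v)=0$ iff $v<n/2$, the labels mapping to $0$ are $\{1,\dots,n/2-1\}$, only $n/2-1$ of them, and the labels mapping to $1$ are $\{n/2,\dots,n\}$, which is $n/2+1$ of them. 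So strictly $I(\pi)\in\mathbb{B}_{n/2+1}^n$ under this reading; I would flag that the intended convention (consistent with the rest of the paper, e.g.\ the target $(0^{\lfloor n/2\rfloor}1^{\lceil n/2\rceil})$) is $I(v)=0$ iff $v\le n/2$, or equivalently one re-indexes labels by $\{0,\dots,n-1\}$, under which the count is exactly $n/2$ zeros and $n/2$ ones and the preimage-counting argument goes through verbatim with $|I^{-1}(b)| = (n/2)!^2$ for all $b\in\mathbb{B}_{n/2}^n$.
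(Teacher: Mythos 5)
Your proof is correct and follows essentially the same route as the paper: the paper's own argument is exactly this preimage count, assigning the $n/2$ ``low'' labels to the $0$ bits and the $n/2$ ``high'' labels to the $1$ bits to get $(n/2)!\,(n/2)!$ permutations per string, hence probability $1/\binom{n}{n/2}$. Your side remark about the strict inequality in the indicator (so that with labels $\{1,\dots,n\}$ one should read $I(v)=0$ iff $v\le n/2$, or re-index from $0$) is a fair observation about a convention the paper also glosses over, and does not affect the argument.
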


\begin{proof}
We use a counting argument. The number of permutations $\pi$ such that $I(\pi)\in\mathbb{B}_{n/2}^n$ is $(n/2)!(n/2)!$, since we can freely assign index labels from $\{1,2,\ldots, n/2\}$ to the 0 bits of $I(\pi)$, and from $\{n/2+1,\ldots, n\}$ to the 1 bits of $I(\pi)$. Therefore, for a uniformly random $\pi$ and arbitrary $b\in\mathbb{B}_{n/2}^n$, 
\begin{equation}
    \probability*{I(\pi) = b} = \frac{(n/2)!(n/2)!}{n!} = \frac{1}{\binom{n}{n/2}} = \frac{1}{|\mathbb{B}_{n/2}^n|}.
\end{equation}
Therefore, $I(\pi)\sim\mathcal{U}(\mathbb{B}_{n/2}^n)$.
\end{proof}

While $\mathbb{B}_{n/2}^{n}$ is easier to work with than $\mathcal{S}_n$, the constraint on the Hamming weight still poses an issue when we try to analyze the runtime recursively. To address this, \cref{lem:relaxation} below shows that relaxing from $\mathcal{U}(\mathbb B_{n/2}^n)$ to $\mathcal{U}(\mathbb{B}^n)$ does not affect expectation values significantly.

We give a recursive form for the runtime of TBS. We use the following convention for the substrings of an arbitrary $n$-bit string $a$: if $a$ is divided into $3$ segments, we label the segments $a_{0.0},a_{0.1},a_{0.2}$ from left to right. Subsequent thirds are labeled analogously by ternary fractions. For example, the leftmost third of the middle third is denoted $a_{0.10}$, and so on. Then, the runtime of TBS on string $a$ can be bounded by
\begin{equation}
    \tbs(a) \le \max_{i\in \set{0,1,2}} \tbs(a_{0.i}) + \frac{n_1(a_{0.0}) + n_1(\overline{a_{0.2})} + n/3 + 1}{3},
\end{equation}
where $\overline{a}$ is the bitwise complement of bit string $a$ and $n_1(a)$ denotes the Hamming weight of $a$. Logically, the first term on the right-hand side is a recursive call to sort the thirds, while the second term is the time taken to merge the sorted subsequences on the thirds using a reversal. Each term $T(a_{0.i})$ can be broken down recursively until all subsequences are of length 1. This yields the general formula
\begin{equation}
\label{eq:TBS-recursion}
  \tbs(b) \le \frac{1}{3}\left(\sum\limits_{r=1}^{\lceil\log_3(n)\rceil}\max_{i\in \set{0,1,2}^{r-1}}\{n_1(a_{0.i0}) + n_1(\overline{a_{0.i2}})\} + n/3^r + 1\right),
\end{equation}
where $i \in \emptyset$ indicates the empty string.

\begin{lemma}
  \label{lem:relaxation}
  Let $a\sim\mathcal{U}(\mathbb{B}^n)$ and $b\sim\mathcal{U}(\mathbb{B}_{n/2}^n)$. Then
  \begin{equation}
    \expected*{\tbs(b)} \le \expected*{\tbs(a)} + \widetilde{O}(n^\alpha)
  \end{equation}
where $\alpha\in(\frac{1}{2},1)$ is a constant.
\end{lemma}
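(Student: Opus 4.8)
The plan is to compare the two expectations term by term using the recursive bound \eqref{eq:TBS-recursion}, and to absorb the difference into a polynomial error. The key observation is that the only quantities that appear in \eqref{eq:TBS-recursion} are Hamming weights of substrings, $n_1(a_{0.i0})$ and $n_1(\overline{a_{0.i2}})$, together with deterministic terms $n/3^r + 1$. Thus it suffices to control how much the law of a fixed substring's Hamming weight changes when we pass from a uniformly random $b \in \mathbb{B}_{n/2}^n$ to a uniformly random $a \in \mathbb{B}^n$, and then bound the contribution of the $\max$'s. The deterministic terms contribute equally to both sides and cancel, so the task reduces to estimating $\expected*{\max_{i\in\set{0,1,2}^{r-1}}\{n_1(b_{0.i0}) + n_1(\overline{b_{0.i2}})\}}$ versus the same quantity for $a$, summed over $r$.

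First I would use the standard coupling between the uniform distribution on $\mathbb{B}^n$ and the uniform distribution on $\mathbb{B}_{n/2}^n$: one obtains $b$ from $a$ by conditioning on $n_1(a) = n/2$, and the Hamming weight $n_1(a)$ is a $\mathrm{Binomial}(n,1/2)$ variable that concentrates in a window of width $\widetilde{O}(\sqrt{n})$ around $n/2$. Hence for any fixed substring $s$ of $b$ of length $\ell$, $n_1(s)$ is within $\widetilde{O}(\sqrt{\ell})$ of its mean $\ell/2$ with probability $1 - n^{-\omega(1)}$, and the same holds for $a$. More precisely, one can show that for each level $r$ the random variable $M_r := \max_{i}\{n_1(b_{0.i0}) + n_1(\overline{b_{0.i2}})\}$ has expectation at most $\tfrac{2}{3}\cdot\tfrac{n}{3^{r-1}} + \widetilde{O}(\sqrt{n/3^{r-1}}) + \widetilde O(\sqrt{n})$, where the first error comes from a union bound over the $3^{r-1}$ choices of $i$ (each contributing a Gaussian tail, so the max of $3^{r-1}$ of them is at most $O(\sqrt{\log 3^{r-1}})$ standard deviations above the mean) and the second $\widetilde O(\sqrt n)$ error comes from the conditioning on $n_1(b) = n/2$. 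The crucial point is that the leading term $\tfrac23 n/3^{r-1}$ is \emph{identical} for $a$ and $b$: conditioning on the global Hamming weight does not change the expected weight of a substring except by a lower-order amount. Summing over $r = 1,\dots,\lceil\log_3 n\rceil$, the leading terms coincide and the accumulated error is $\sum_r \widetilde{O}(\sqrt{n/3^{r-1}}) + \lceil\log_3 n\rceil\cdot\widetilde O(\sqrt n) = \widetilde{O}(\sqrt{n})$, dividing by $3$ at the end. Choosing any $\alpha\in(\tfrac12,1)$ comfortably absorbs this, giving $\expected*{\tbs(b)} \le \expected*{\tbs(a)} + \widetilde{O}(n^\alpha)$.

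The main obstacle I anticipate is making the comparison of the two \emph{maxima} rigorous rather than just comparing the individual terms: $\expected*{\max_i X_i}$ is not monotone or Lipschitz in the underlying distribution in an obvious way, so one cannot simply say "the substring weights changed by $\widetilde O(\sqrt n)$, hence the max did too." The clean way around this is to bound $\expected*{\tbs(b)}$ directly from \eqref{eq:TBS-recursion} by a high-probability argument: on the event (which holds with probability $1 - n^{-\omega(1)}$) that every substring at every level has Hamming weight within $\widetilde O(\sqrt{n})$ of half its length, $\tbs(b)$ is at most $\tfrac13\sum_r (\tfrac23 n/3^{r-1} + n/3^r + 1) + \widetilde O(\sqrt n) = 2n/3 + \widetilde O(\sqrt n)$ — and the same bound holds for $\tbs(a)$ — while on the complementary negligible-probability event one uses the trivial worst-case bound $\tbs(\cdot) = O(n\log n)$ (or \cref{permutationbound}-type bounds) so the contribution to the expectation is $n^{-\omega(1)}$. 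Then both $\expected*{\tbs(a)}$ and $\expected*{\tbs(b)}$ equal $2n/3 + \widetilde O(\sqrt n)$, and the claimed inequality follows a fortiori with $\widetilde O(n^\alpha) = \widetilde O(\sqrt n) \le \widetilde O(n^\alpha)$ for any $\alpha > 1/2$. In fact this argument proves the stronger statement that both expectations are $2n/3 + \widetilde O(\sqrt n)$, which is exactly what \cref{avg-TBS} needs downstream; the present lemma is the convenient packaging of the "relaxation" half of that.
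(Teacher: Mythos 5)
Your route is genuinely different from the paper's: instead of coupling the two distributions, you argue that $\tbs(a)$ and $\tbs(b)$ both concentrate around the same deterministic value, via concentration of all substring Hamming weights plus a union bound, with the negligible-probability complement handled by a worst-case bound. This can be made to work, but as written it has a real gap. Your high-probability argument yields only \emph{upper} bounds: on the good event, $\tbs(b) \le C_n + \widetilde{O}(\sqrt n)$ and $\tbs(a) \le C_n + \widetilde{O}(\sqrt n)$ for a common typical value $C_n$. An upper bound on $\expected{\tbs(b)}$ alone does not imply $\expected{\tbs(b)} \le \expected{\tbs(a)} + \widetilde{O}(n^\alpha)$; for that you also need $\expected{\tbs(a)} \ge C_n - \widetilde{O}(n^\alpha)$, i.e.\ a matching \emph{lower} bound on $\tbs(a)$, which you assert (``both expectations equal \dots'') but never argue. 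It is provable---on the good event every root-to-leaf branch of the ternary recursion has merge reversals of length close to typical, the merges along a fixed branch are executed sequentially, and a reversal of length $\ell$ takes time roughly $\ell/3$, so $\tbs(a) \ge C_n - \widetilde{O}(\sqrt n)$---but this step is missing, and note that \cref{eq:TBS-recursion} is stated only as an upper bound, so the lower bound requires this separate structural observation. You would also want to name the concentration tool for $b$: substring weights under $\mathcal{U}(\mathbb{B}^n_{n/2})$ are hypergeometric, so you need Hoeffding for sampling without replacement (or an equivalent), not just the binomial tail.

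Two further slips do not affect the lemma (the leading constants cancel between $a$ and $b$) but do affect your closing remark: the typical value of $\max_i\{n_1(a_{0.i0}) + n_1(\overline{a_{0.i2}})\}$ at level $r$ is $n/3^r + \widetilde{O}(\sqrt{n/3^r})$, not $\tfrac{2}{3}\, n/3^{r-1} = 2n/3^r$ (that is the worst case), and the resulting typical time of a \emph{single} TBS call is $n/3 + \widetilde{O}(\sqrt n)$, not $2n/3$; the $2n/3$ of \cref{avg-TBS} arises only after summing over the halving recursion of \cref{alg:divconqgeneric}, so your argument does not directly deliver ``exactly what \cref{avg-TBS} needs.'' For comparison, the paper sidesteps two-sided concentration entirely: it defines an explicit randomized map $f$ that flips the $\abs{n_1(a)-n/2}$ surplus bits at uniformly random positions, verifies $f(a) \sim \mathcal{U}(\mathbb{B}^n_{n/2})$ for $a \sim \mathcal{U}(\mathbb{B}^n)$, bounds the runtime difference pathwise by $\tfrac13\sum_r \min\{2n/3^r, n^\alpha\} = \widetilde{O}(n^\alpha)$ whenever at most $n^\alpha$ bits are flipped, and uses a Chernoff bound with the trivial bound $\tbs \le n$ on the remaining event; your approach, once the lower-bound direction is supplied, buys the stronger statement that both expectations are pinned to $n/3 + \widetilde{O}(\sqrt n)$, at the cost of more machinery.
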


%\begin{theorem}
%  \label{avg-TBS}
 % Let $\hat{\pi}\sim\mathcal{U}(\mathcal{S}_n)$. Then, $\mathbb{E}(\tbs(\hat{\pi}))\le \frac{n}{3} + o(n)$.
%\end{theorem}

The intuition behind this lemma is that by the law of large numbers, the deviation of the Hamming weight from $n/2$ is subleading in $n$, and the TBS runtime does not change significantly if the input string is altered in a subleading number of places.

\begin{proof}
  Consider an arbitrary bit string $a$, and apply the following transformation. If $n_1(a) = k \ge n/2$, then flip $k-n/2$ ones chosen uniformly randomly to zero. If $k < n/2$, flip $n/2-k$ zeros to ones. Call this stochastic function $f(a)$. Then, for all $a$, $f(a)\in \mathbb{B}_{n/2}^n$, and for a random string $a \sim \mathcal{U}(\mathbb{B}^n)$, we claim that $f(a) \sim \mathcal{U}(\mathbb{B}_{n/2}^n)$. In other words, $f$ maps the uniform distribution on $\mathbb{B}^n$ to the uniform distribution on $\mathbb{B}_{n/2}^n$.
  
  We show this by calculating the probability $\probability*{f(a)=b}$, for arbitrary $b\in\mathbb{B}_{n/2}^n$. A string $a$ can map to $b$ under $f$ only if $a$ and $b$ disagree in the same direction: if, WLOG, $n_1(a)\ge n_1(b)$, then $a$ must take value 1 wherever $a,b$ disagree (and 0 if $n_1(a)\le n_1(b)$). We denote this property by $a\succeq b$. The probability of picking a uniformly random $a$ such that $a\succeq b$ with $x$ disagreements between them is $\binom{n/2}{x}$, since $n_0(b)=n/2$. Next, the probability that $f$ maps $a$ to $b$ is $\binom{n/2+x}{x}$. Combining these, we have 
    \begin{align}
    \probability*{f(a) = b} &= \sum\limits_{x=-n/2}^{n/2} \probability*{a\succeq b \text{ and } n_1(a)=\frac{n}{2}+x} \cdot \probability*{f(a)=b \mid a\succeq b \text{ and } n_1(a)=\frac{n}{2}+x} , \\
    &= \sum\limits_{x=-n/2}^{n/2}\frac{\binom{n/2}{|x|}}{2^n}\cdot\frac{1}{\binom{n/2+|x|}{|x|}} ,\\ 
    &= \frac{1}{\binom{n}{n/2}}\sum\limits_{x=-n/2}^{n/2}\frac{\binom{n}{n/2 - x}}{2^n} ,\\ 
    &= \frac{1}{\binom{n}{n/2}} = \frac{1}{|\mathbb{B}_{n/2}^n|} .
  \end{align}
Therefore, $f(a)\sim \mathcal{U}(\mathbb{B}_{n/2}^n)$. Thus, $f$ allows us to simulate the uniform distribution on $\mathbb{B}_{n/2}^n$ starting from the uniform distribution on $\mathbb{B}^n$.
  
Now we bound the runtime of TBS on $f(a)$ in terms of the runtime on a fixed $a$.
Fix some $\alpha\in(\frac{1}{2},1)$.
We know that $n_1(f(a)) = n/2$,
and suppose $\abs{n_1(a) - n/2} \le n^\alpha$.
Since $f(a)$ differs from $a$ in at most $n^\alpha$ places, then at level  $r$ of the TBS recursion (see~\cref{eq:TBS-recursion}), the runtimes of $a$ and $f(a)$ differ by at most $1/3\cdot\min \{2n/3^r, n^\alpha\}$.
This is because the runtimes can differ by at most two times the length of the subsequence.
Therefore, the total runtime difference is bounded by
\begin{align}
  \Delta \tbs &\le 
  \frac{1}{3}\sum\limits_{r=1}^{\lceil\log_3(n)\rceil} \min\Bigl\{\frac{2n}{3^r}, n^\alpha\Bigr\} ,\\ 
  &=\frac{1}{3}\left(\sum\limits_{r=1}^{\lceil\log_3(2n^{1-\alpha})\rceil} n^\alpha + 2\sum\limits_{r=\lceil\log_3(2n^{1-\alpha})\rceil+1}^{\lceil\log_3(n)\rceil} \frac{n}{3^r} \right) ,\\
  &=\frac{1}{3}\left(n^\alpha\log(2n^\alpha/3) + 2\sum\limits_{s=0}^{\lfloor\log_3(n^\alpha/2)\rfloor-1} 3^s\right)\\
 &= \frac{1}{3}\left(n^\alpha\log(2n^\alpha/3) + n^\alpha/2 - 1\right) = \widetilde{O}(n^{\alpha}).
\end{align}
On the other hand, if $\abs{n_1(a) - n/2} \ge n^\alpha/2$, we simply bound the runtime by that of OES, which is at most $n$.

Now consider $a \sim \mathcal{U}(\mathbb{B}^n)$ and
% Now let us consider a uniformly random permutation $\pi \sim \mathcal{U}(\mathcal S_n)$ and its associated bit string $\hat b = I(\pi)$, with 
$b = f(a) \sim \mathcal{U}(\mathbb{B}^n_{n/2})$.
Since $n_1(a)$ has the binomial distribution $\mathcal B(n, 1/2)$,
where $\mathcal B(k,p)$ is the sum of $k$ Bernoulli random variables with success probability $p$,
the Chernoff bound shows that
deviation from the mean is exponentially suppressed, i.e.,
\begin{equation}
    \probability{\abs{n_1(a) - n/2}\ge n^\alpha} = \exp(-\bigo{n^{2\alpha - 1}}).
\end{equation}
Therefore, the deviation in the expectation values is bounded by 
\begin{equation}
    \abs*{\expected*{\tbs(f(a))} - \expected*{\tbs(a)}} \le n\exp(-\bigo{n^{2\alpha - 1}}) + c(1-\exp(-\bigo{n^{2\alpha - 1}}) )n^{\alpha}\log(n) = \widetilde{O}(n^{\alpha}),
\end{equation}
where $c$ is a constant. Finally, we conclude that
\begin{equation}
  \expected{\tbs(b)}\le \expected{\tbs(a)} + \widetilde{O}(n^\alpha)
\end{equation}
as claimed.
\end{proof}

Next, we prove the main result of this section, namely, that the runtime of \Alg{TBS} is $2n/3$ up to additive subleading terms.

\begin{proof}[Proof of~\cref{avg-TBS}]
We first prove properties for sorting a random $n$-bit string $a \sim \mathcal U(\mathbb B^n)$
and then apply this to the problem of sorting $b \sim \mathcal U(\mathbb B^n_{n/2})$ using \cref{lem:bitization,lem:relaxation}.

The expected runtime for TBS can be calculated using the recursive formula in~\cref{eq:TBS-recursion}:
\begin{equation}
  \expected{\tbs(a)} \le \frac{1}{3}\left(\sum\limits_{r=1}^{\log_3(n)}\expected*{\max_{i\in \set{0,1,2}^{r-1}}\{n_1(a_{0.i0}) + n_1(\overline{a_{0.i2}})\}} + n/3^r + 1\right).\label{eq:expectedTimeBitstrings}
\end{equation}
The summand contains an expectation of a maximum over Hamming weights of i.i.d.\ uniformly random substrings of length $n/3^r$, which is equivalent to a binomial distribution $\mathcal B(n/3^r, 1/2)$ where we have $n/3^r$ Bernoulli trials with success probability $1/2$.
Because of independence, if we sample $X_1, X_2  \sim \mathcal B(n/3^r,1/2)$, then $X_1 + X_2 \sim \mathcal B(2n/3^r, 1/2)$.
Using \cref{lem:expect-max} with $m=3^{r-1}$, the expected maximum can be bounded by
\begin{align}
    %  \frac{n}{3^r} + \sqrt{\frac{n (r-1)\log 3}{2\cdot 3^r}} + \min\set*{\frac{2n}{3^r}, \bigo*{3^{r/2-1}\sqrt{2n}}} = \frac{n}{3^r} + \bigo*{n^{2/3}},
    \frac{n}{3^r} + O\left(\sqrt{(n/3^r)\log(3^{r-1}n/3^r)}\right)
    = \frac{n}{3^r} + \tildebigo*{n^{1/2}}
\end{align}
% where we maximized the minimum at $r=(\log_3(2n)+2)/3$.
since the second term is largest when $r=O(1)$.
Therefore,
\begin{equation}
    \expected{\tbs(a)} \le \frac{1}{3}\left(\sum\limits_{r=1}^{\log_3(n)} \frac{2n}{3^r} \right) + \tildebigo*{n^{1/2}} = \frac{n}{3} + \tildebigo*{n^{1/2}}. 
\end{equation}
\cref{lem:relaxation} then gives
$\expected{\tbs(b)} \le \frac{n}{3} + \tildebigo*{n^\alpha}$.

The routing algorithm \Alg{TBS} proceeds by calling TBS on the full path, and then in parallel on the two disjoint sub-paths of length $n/2$.
We show that the distributions of the left and right halves are uniform if the input permutation is sampled uniformly as $\pi \sim \mathcal U(\mathcal S_n)$.
There exists a bijective mapping $g$ such that $g(\pi) = (b, \pi_L, \pi_R) \in \mathbb B^n_{n/2} \times \mathcal S_{n/2}\times \mathcal S_{n/2}$ for any $\pi \in \mathcal S_n$ since
\begin{equation}
    \abs*{\mathcal S_n} = n! =  \binom{n}{n/2} \left(\frac{n}{2}\right)! \left(\frac{n}{2}\right)! = \abs*{\mathbb B^n_{n/2} \times \mathcal S_{n/2}\times \mathcal S_{n/2}}.
\end{equation}
In particular, $g$ can be defined so that $b$ specifies which entries are taken to the first $n/2$ positions---say, without changing the relative ordering of the entries mapped to the first $n/2$ positions or the entries mapped to the last $n/2$ positions---and $\pi_L$ and $\pi_R$ specify the residual permutations on the first and last $n/2$ positions, respectively.
Given $g(\pi) = (b, \pi_L, \pi_R)$, TBS only has access to $b$.
After sorting, TBS can only perform deterministic permutations $\mu_L(b), \mu_R(b) \in \mathcal S_{n/2}$ on the left and right halves, respectively, that depend only on $b$.
Thus TBS performs the mappings $\pi_L \mapsto \pi_L \circ (\mu_L(b))$ and $\pi_R \mapsto \pi_R \circ (\mu_R(b))$ on the output.
Now it is easy to see that when $\pi_L, \pi_R \sim \mathcal U(\mathcal S_{n/2})$,
the output is also uniform because the TBS mapping is independent of the relative permutations on the left and right halves.

More generally, we see that a uniform distribution over permutations $\mathcal U(\mathcal S_n)$ is mapped to two uniform permutations on the left and right half, respectively.
Symbolically, for, $\pi \sim \mathcal U(\mathcal S_n)$,
we have that 
\begin{equation}
    g(\pi) = (b, \pi_L, \pi_R) \sim \mathcal U(\mathbb B^n_{n/2} \times \mathcal S_{n/2}\times \mathcal S_{n/2}) = \mathcal U(\mathbb B^n_{n/2}) \times \mathcal U(\mathcal S_{n/2}) \times \mathcal U(\mathcal S_{n/2}).
\end{equation}
As shown earlier, given uniform distributions over left and right permutations, the output is also uniform.
By induction, all permutations in the recursive steps are uniform.

We therefore get a sum of expected TBS runtime on bit strings of lengths $n/3^r$, i.e.,
\begin{equation}
\label{eq:GDC-recursion}     \sum_{r=1}^{\log_2 n} \expected{T(b_r)} \le \sum_{r=1}^{\log_2 n} \expected{T(a_r)} + \tildebigo*{\left(\frac{n}{2^{r-1}}\right)^\alpha} \le \frac{2n}{3} + \tildebigo{n^\alpha}
\end{equation}
where, by \cref{lem:bitization} and the uniformity of permutations in recursive calls, we need only consider $b_r \sim \mathcal U(\mathbb B_{n/2^{r-1}}^{n/2^r})$ and
we bound the expected runtime using \cref{lem:relaxation} with $a_r \sim \mathcal U(\mathbb B^{n/2^{r-1}})$.
\end{proof}
    
We end with a lemma about the order statistics of binomial random variables used in the proof of the main theorem.

\begin{lemma} 
\label{lem:expect-max}
Given $m$ i.i.d.\ samples from the binomial distribution $X_i \sim \mathcal \mathcal{B}(n,p)$
with $i \in [m]$, and $p\in [0,1]$,
the maximum $Y = \max_i X_i$ satisfies
\begin{equation}
\expected Y < pn + O\left(\sqrt{n\log(mn)}\right).
\end{equation}
\end{lemma}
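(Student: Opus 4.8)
The plan is to control the upper tail of $Y=\max_i X_i$ via a union bound over the $m$ samples, using a Chernoff/Hoeffding-type deviation inequality for a single binomial, and then integrate the tail bound to recover the expectation. First I would fix a deviation parameter $t>0$ and write $\probability{Y \ge pn + t} \le m\,\probability{X_1 \ge pn+t}$ by the union bound. For the single-variable tail I would apply a standard Chernoff bound for the binomial, e.g.\ $\probability{X_1 \ge pn + t} \le \exp(-2t^2/n)$ (Hoeffding), which is valid for all $p\in[0,1]$ and needs no assumption on $p$. Choosing $t = \Theta(\sqrt{n\log(mn)})$ makes this probability at most, say, $1/(mn)$ or smaller, so that $\probability{Y \ge pn + t} \le 1/n$.

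Next I would convert this into a bound on $\expected Y$. Since $0 \le Y \le n$ always, I can split $\expected Y = \expected{Y \mathbf 1[Y < pn+t]} + \expected{Y\mathbf 1[Y \ge pn+t]} \le (pn+t) + n\cdot\probability{Y \ge pn+t} \le pn + t + 1$. With $t = O(\sqrt{n\log(mn)})$ the additive $+1$ is absorbed into the $O(\cdot)$ term, giving $\expected Y < pn + O(\sqrt{n\log(mn)})$ as claimed. (Alternatively one could use the layer-cake formula $\expected Y = \sum_{k\ge 1}\probability{Y\ge k}$, splitting the sum at $k_0 = \lceil pn+t\rceil$ and bounding the tail sum by $\sum_{k\ge k_0}\probability{Y\ge k} \le \sum_{k\ge k_0} m e^{-2(k-pn)^2/n}$, which is $o(1)$ for this choice of $t$; the first approach is cleaner.)

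The only mild subtlety — and the step I would be most careful about — is making sure the constant inside the square root is chosen so that the union-bound factor $m$ is overwhelmed for \emph{every} regime of $m$ and $n$ that actually arises (in the application $m=3^{r-1}$ and $n$ is replaced by $n/3^r$, so $m$ can be comparable to $n$). Taking $t = \sqrt{(n/2)\log(m^2 n^2)} = \sqrt{(n/2)\cdot 2\log(mn)} = \sqrt{n\log(mn)}$ gives $m e^{-2t^2/n} = m e^{-2\log(mn)} = m/(mn)^2 = 1/(m n^2) \le 1/n^2$, which is more than enough. No genuine obstacle is expected here; the argument is a routine concentration-plus-union-bound computation, and the strict inequality in the statement is harmless since the $O(\cdot)$ term can always be inflated to absorb the additive constant.
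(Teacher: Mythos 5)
Your proposal is correct and follows essentially the same route as the paper: a Hoeffding tail bound for a single binomial at deviation $\Theta(\sqrt{n\log(mn)})$, control of the maximum over the $m$ samples (your union bound is equivalent to the paper's exact product $\probability{X<\cdot}^m$ plus Bernoulli's inequality), and then splitting $\expected Y$ at the threshold with the trivial bound $Y\le n$ on the tail. No gaps; the constant-tuning remark you flag is handled identically in the paper by choosing the constant $c>1$ inside the square root.
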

\begin{proof}
We use Hoeffding's inequality for the Bernoulli random variable $X\sim \mathcal{B}(n,p)$, which states that 
\begin{equation}
\probability*{X\ge (p+\epsilon)n} \le \exp(-2n\epsilon^2) \quad \forall \e \geq 0.   
\end{equation}

Pick $\epsilon = \sqrt{\frac{c}{2n}\log(mn)}$, where $c>0$ is a constant. 
For this choice, we have 
\begin{equation}
    \probability*{X_i\ge (p+\epsilon)n} \le \left(\frac{1}{mn}\right)^c
\end{equation} 
for every $i=1,\ldots,m$. 
Then the probability that $Y < (p+\epsilon)n$ is identical to the probability that $X_i < (p+\epsilon)n$ for every $i$, which for i.i.d $X_i$ is given by
\begin{equation}
\probability*{Y < (p+\epsilon)n} = \probability*{X < (p+\epsilon)n}^m > \left(1-\frac{1}{(mn)^c}\right)^m.
\end{equation}
Using Bernoulli's inequality ($(1+x)^r \ge 1+rx$ for $x\ge -1$), we can simplify the above bound to
\begin{equation}
  \probability*{Y < (p+\epsilon)n}^m > 1-m^{1-c}n^{-c}.
\end{equation}
Finally, we bound the expected value of $Y$ by an explicit weighted sum over its range:
\begin{align}
    \expected Y &= \sum_{k=0}^n \probability*{Y = k}\cdot k \\
    &= \sum_{k=0}^{\lfloor(p+\epsilon)n\rfloor} \probability*{Y = k}\cdot k + \sum_{k=\lfloor(p+\epsilon)n\rfloor+1}^{n} \probability*{Y = k}\cdot k \\
    &\leq \sum_{k=0}^{\lfloor(p+\epsilon)n\rfloor} \probability*{Y=k})\cdot k + n \cdot \sum_{k=\lfloor(p+\epsilon)n\rfloor+1}^{n} \probability*{Y = k} \\
    &\leq \sum_{k=0}^{\lfloor(p+\epsilon)n\rfloor} \probability*{Y = k}\cdot k + (mn)^{1-c} \\
    &\leq (p+\epsilon)n + (mn)^{1-c}.
\end{align}
Since $(mn)^{1-c} < 1$ for $c>1$,
\begin{equation}
   \expected Y < \left\lceil pn + 1 + \sqrt{\frac{cn}{2}\log(mn)} \, \right\rceil = pn + O(\sqrt{n\log(mn)})\\
\end{equation}
as claimed.
\end{proof}

\end{document}